\newtheorem{lemma}{Lemma}
\newtheorem{theorem}{Theorem}
\newcommand{\sign}[1]{\mathrm{sgn}(#1)}
\begin{document}
	\date{\today}
	\title{Quantum Classifiers with Trainable Kernel}
	\author{Li Xu}
	\affiliation{College of Science, China University of Petroleum, 266580 Qingdao, P.R. China.}
	\author{Xiao-yu Zhang}
	\affiliation{College of Science, China University of Petroleum, 266580 Qingdao, P.R. China.}
	\author{Ming Li}
	\email{liming@upc.edu.cn.}
	\affiliation{College of Science, China University of Petroleum, 266580 Qingdao, P.R. China.}
    \author{Shu-qian Shen}
	\affiliation{College of Science, China University of Petroleum, 266580 Qingdao, P.R. China.}
	\begin{abstract}
	 Kernel function plays a crucial role in machine learning algorithms such as classifiers.  In this paper, we aim to improve the classification performance and reduce the reading out burden of quantum classifiers. We devise a universally trainable quantum feature mapping layout to broaden the scope of feature states and avoid the inefficiently straight preparation of quantum superposition states. We also propose an improved quantum support vector machine that employs partially evenly weighted trial states. In addition, we analyze its error sources and superiority. As a promotion,  we propose a  quantum iterative multi-classifier framework for one-versus-one and one-versus-rest approaches. Finally, we conduct corresponding numerical demonstrations in the \textit{qiskit} package. The simulation result of trainable quantum feature mapping shows considerable clustering performance, and the subsequent classification performance is superior to the existing quantum classifiers in terms of accuracy and distinguishability.
	\end{abstract}
	\maketitle

	\section{Introduction}
    Quantum algorithms have been widely studied in the past few decades, with the hope of overcoming some problems that are intractable to classical computers. Notable examples include Shor's  quantum factoring algorithm \cite{Shor}, Grover's search  algorithm \cite{grover}, and the quantum algorithm for linear systems of equations proposed by Harrow, Hassidim, and Lloyd(HHL algorithm) \cite{HHL}. At the same time, machine learning has exploded into a hot research topic. The intersection between machine learning and quantum computation has burst an interdisciplinary field, quantum machine learning, attracting massive academic research and application during the past decade \cite{QML,QML1,QPCA,QML2}.

     Support vector machine(SVM) is a representative supervised machine learning algorithm aiming to return an optimal hyperplane that separates two classes of samples with distinct labels and then assigns a label for a new datum determined by which side it lies {\cite{SVM,svm}}. Drawing support from the parallelism of quantum algorithms, quantum machine learning algorithms could surpass their classical counterparts, especially in processing big data. Specifically, the least square quantum support vector machine(LS-QSVM) inherits a logarithmic level complexity by means of superpositions and  HHL algorithm \cite{QSVM}.  Meanwhile, the experiment verifies a proof-of-principle task \cite{EQSVM}. However, as the dataset size multiplicatively increases, the distinguishability of LS-QSVM becomes poor, which goes against `big data'(refer to Theorem \ref{th1}). To overcome this challenge, we propose the support vectors based quantum support vector machine(SV-QSVM). More importantly, if the data are fed classically, it will take $O(N)$ operations to load classical data into superposition, where $N$ represents the data dimension \cite{QRAM,qram,qram1}. This scaling can  predominate the complexity of a quantum algorithm and, thereby, impair potential quantum advantages.
	
	Recently, Schuld \textit{et al} and Havl\'{i}$\rm\check{c}$ek \textit{et al} have proposed an artful path that could circumvent the direct preparation of coherent superpositions by mapping the data to an exponential Hilbert feature space \cite{qfm,qfm1}. This kernel-based method stimulates the widespread applications of noisy intermediate-scale quantum(NISQ) computers \cite{nisq,nisq1,nisq2,nisq3,nisq4}. Here we collectively refer to the approaches only invoke classical optimizers in \cite{qfm,qfm1} as \textit{explicit approach}, and the approaches that utilize classical machine learning algorithms are referred to \textit{implicit approach}.  We briefly retrospect these two approaches and analyze their discrepancies and drawbacks. Both approaches demand the use of quantum feature mapping(QFM) to load classical data into quantum feature states. The explicit approach applies a parameterized quantum circuit(PQC) $W(\theta)$ optimized during the training process to the feature state, then measures the final states in the Z-basis. The classification result is revealed by some Boolean function $f:\{0,1\}^n\rightarrow\{1,-1\}$ through the output bit string, and the optimal parameters are acquired by minimizing the empirical risk function. However, the efficiency of the optimization phase is unsatisfactory, and more importantly, it cannot achieve a satisfactory classification success rate. In contrast, the \textit{implicit approach} utilizes quantum devices only for evaluating the kernel matrices and leaves the other assignments to classical computers. Hence, the \textit{implicit approach} inherits the accurate calculation from the classical computers and weakness---intractable with handling big data.

    To summarize, both approaches utilize QFM for loading classical information into quantum feature states, and then processing them in the Hilbert space. With an appropriate QFM, data can be repositioned ideally, i.e. data of the same category can be clustered, while data of different categories can be separated from each other.	When given an ill-conditioned QFM, for \textit{explicit approach}, it will increase the burden of PQC to reposition the locations of data linearly. Even sometimes PQC cannot fully complete the tasks due to linearity. In contrast, \textit{implicit approach} will directly return an ill-conditioned kernel matrix, since it thoroughly relies on QFM to construct a kernel matrix. As a consequence, the follow-up classical machine learning algorithm will show poor performance. Accordingly, finding a suitable QFM will bring significant benefits to both approaches. More generally, the properties of kernel matrices are crucial for machine learning algorithms that rely on kernel functions.

    Inspired by data re-uploading strategy \cite{re-up}, we combine the \textit{explicit approach} with \textit{implicit approach} to propose a trainable quantum feature mapping(TQFM) layout that owns multiple subsequent usability. On the one hand, a well-trained QFM circuit serves as the \textit{explicit approach}, and it owns a better data fitting effect because of nonlinear data re-uploading. Although the re-uploading model can be mapped to a linear explicit model, the overhead is high---$O(D\log D)$ additional qubits and gates, with $D$ the number of encoding gates used by the data re-uploading model \cite{re-up1}. On the other hand, the quantum feature states can be used to prepare kernel matrices, and then applied as subroutines to kernel based machine learning algorithms. Compared to the \textit{explicit approach}, which only relies on training parameters and no longer uses samples for classification. The latter requires both the parameters and samples, that is, to fully utilize the information of samples. Therefore, this method is particularly important for small sample learning.

	The structure of the paper is as follows: Section \ref{section2} presents the construction of trainable quantum feature mapping. The three subsequent use of TQFM is discussed simply in section \ref{section3}. The algorithm flow of SV-QSVM and error analysis is shown in Section \ref{section4}. In Section \ref{section5}, we discuss the essence of multi-classification algorithms and propose quantum iterative multi-classifiers.  The numerical simulations of Sections \ref{section2}, \ref{section3}, \ref{section4}, and \ref{section5} are exhibited in the four subsections of Section \ref{section6}, respectively. Section \ref{section7} concludes the primary work and future outlook.
	
	\section{Trainable quantum feature mapping}\label{section2}
	The quantum feature mappings are superior to classical counterparts in respect of expressiveness and get benefit from the exponential Hilbert space, hence machine learning algorithms with quantum kernels have the potential to achieve better performance than those with classical kernels. In this section, we propose the layout and loss function of trainable quantum feature mapping, and explore the optimal way to optimize the loss function.
    \subsection{Layout and loss function}
    The core idea of quantum kernel tricks that leads to its performance surpassing classical is the quantum kernel estimation, which maps $n$-dimensional classical data $x\in{\mathbb{R}}^n$ non-linearly to a quantum state $|\psi(x)\rangle=U(x)|0...0\rangle$ with the general dimension $2^n$. Here $U(x)$ is a PQC usually compiled by the feature of $x$ that decides the unique quantum feature mapping, and its layout affects the behavior of QFM.

    The central insight of quantum kernel is utilizing quantum circuits estimating the kernel functions $k^p(x_i,x_j)=|\langle{\psi(x_i)}|\psi(x_j)\rangle|^p$ in the Hilbert space, where $p$ is a positive integer. Why it is not possible to directly apply the PQC $W(\theta)$ of \textit{explicit approach} after $|\psi(x)\rangle$ and then estimate the kernel function is that $\langle{\psi(x_i)}|W^{\dagger}(\theta)W(\theta)|\psi(x_j)\rangle=\langle{\psi(x_i)}|\psi(x_j)\rangle$. Even if $W(\theta)$ possesses the capacity to reposition data,  it seems to do some useless work when calculating kernel functions. A common option to circumvent this situation is to re-upload data and integrate the layout of QFM with PQC, giving the formula of TQFM
    \begin{align}
    |\psi(x,\theta)\rangle&=U(x,\theta)|0\rangle^n,\notag\\
    U(x,\theta)&=U_l(x,\theta_l)U_{ent}...U_2(x,\theta_2)U_{ent}U_1(x,\theta_1),
    \end{align}
where $\theta$ is a family of parameters, and the functional relationship between $x$ and $\theta$ is user-specified, just like the activation function in neural networks. It should be pointed out that the construction of this function has different requirements depending on its subsequent use. If TQFM is directly used for classification (Section \ref{section3a}), more sophisticated functions are required, such as in \cite{re-up}. If TQFM is used for preparing ensembles or kernel matrices, a linear function is sufficient. The layout of $U(x,\theta)$ is shown in FIG.\ref{fig1a}.
    \subsection{Training stage}
\begin{figure}[t]
		\centering
        \subfigure[]{\includegraphics[width=0.4\textwidth]{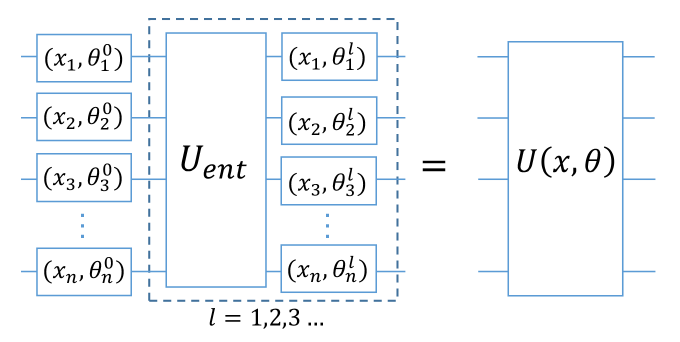}\label{fig1a}}
		\subfigure[]{\includegraphics[width=0.25\textwidth]{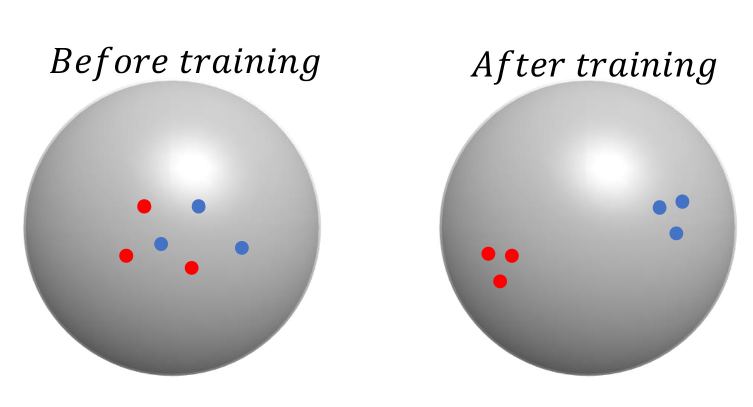}\label{fig1b}}\subfigure[]{\includegraphics[width=0.25\textwidth]{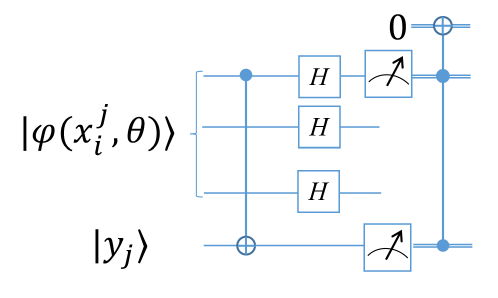}\label{fig1c}}
		\caption{(a) Circuits representation of $U(x,\theta)$. The single qubit gates $(x,\theta)$ are Pauli rotations. $U_{ent}$ are entangled quantum gates.  (b) Illustration of trainable quantum feature mapping. After training, samples of the same category are clustered together, while samples of different categories are orthogonal to each other. (c) Optimal circuit diagram for optimizing loss function. The inner product is given by the probability difference of the classical bit. }
	\end{figure}

    After introducing $\theta$, it is the turn to optimize these parameters to cluster homogeneous data to improve the performance of quantum kernel functions(FIG.\ref{fig1b}). Let $|y_j\rangle_{j=1}^L$ be a real label vector, and $\langle y_i|y_j\rangle=\delta_{ij}$, where $\delta_{ij}$ is the Kronecker delta function. Assuming that each $y_j$ corresponds to $M_j$ training samples. We minimize the following loss function
    \begin{align}
    E(\theta) = 1- \frac{1}{L}\sum_{j=1}^L\frac{1}{M_j}\sum_{i=1}^{M_j}|\langle\psi(x_i^j,\theta)|y_j\rangle|^2\label{e2},
    \end{align}
where $x_i^j$ is the $i$-th data in class $j$.

    Take binary classification as an example, any binary classifier can be transformed into a multi-class
classifier using the one-versus-one or one-versus-rest strategy. Minimizing Eq.(\ref{e2}) is equivalent to maximizing the value of $|\langle\psi(x_i,\theta)|\psi(x_j,\theta)\rangle|^2$ when $x_i$ and $x_j$ belong to the same category and minimizing the value when not. To calculate such an inner product using a quantum computer, one simply applies $U(x_i,\theta)$ and $U(x_j,\theta)$ to the initial state $|0\rangle^n$ individually and then perform a SWAP test. Each SWAP operator needs $2n+1$ qubits, $l$-layer PQC, and $1$ set of measuring devices. As shown in Ref.\cite{qfm1}, if the embedded circuit $U(x,\theta)$ can be inverted, one implements a circuit $U^{\dagger}(x_i,\theta)U(x_j,\theta)|0\rangle^n$, and measures the overlap to the $|0\rangle^n$ state. This inversion test only needs $n$ qubits, $2l$-layer PQC, and $n$ sets of measuring devices overall.

    The authors of Ref.\cite{swap} have proposed multiple ways of optimizing the conventional SWAP test. After employing the novel protocol(FIG.\ref{fig1c}), the overhead of minimizing Eq.(\ref{e2}) can be decreased to $n+1$ qubits, $l$-layer PQC, and $2$ sets of measuring devices in total. More importantly, the reduced local loss function can be trained more effectively without suffering from barren plateaus. Supposing the labels of two categories are $1$ and $-1$ we can rewrite the label vector $|y_a\rangle=|\frac{1+a}{2}\rangle|\cdot\rangle^{n-1}$, with $a=\pm 1$. Removing the last $n-1$ qubits of label vector does not affect the orthogonality, it only has a certain impact on the compactness of clustering. Therefore, to save resources, swapping only the first qubit of $|\psi(x^a,\theta)\rangle$ with $|\frac{1+a}{2}\rangle$ is sufficient.

    \section{Usability of Trainable Quantum Feature Mapping}\label{section3}
    Compared to QFM, the trainability (the number of trainable parameters) and nonlinearity of TQFM give it more powerful modes and more optional subsequent uses.  Generally speaking, a TQFM with more parameters means stronger trainability. In this section, we present three subsequent options of TQFM in obtaining the optimal parameter $\theta^*$.
    \subsection{Explicit approach}\label{section3a}
    \textit{Explicit approach}, also known as \textit{quantum neural network} (QNN), consists of PQCs and gradient-based optimizers. TQFM can serve as a QNN since one just needs to compile new datum $x$ and $\theta^*$ into the TQFM framework, and then reveals the result by measuring the frequency of the label qubit.
    This is just one of the follow-up uses of TQFM, and we will not delve deeper into it here, as quantum neural networks have been widely studied  \cite{QNN1,QNN,QNN2,QNN3,QNN4}.
    \subsection{Ensemble model}
 As the training data has already been clustered, one can prepare the ensembles for the two categories \cite{qmc,seth}
    \begin{align*}
    \rho^a=\sum_{i=1}^{M_a}p_i^a\rho_i^a=\sum_{i=1}^{M_a}p_i^a|\psi(x_i^a,\theta^*)\rangle\langle\psi(x_i^a,\theta^*)|,
    \end{align*}
where $p_i^a\geqslant 0$ and $\sum_ip_i^a=1$, $a=\pm 1$. If the $l_1$ distance is chosen, the label for trial datum is given by
    \begin{align*}
    y &= \sign{\rm{Tr}(\rho^1\rho)-Tr(\rho^{-1}\rho)},
    \end{align*}
where $\rho=|\psi(x,\theta^*)\rangle\langle\psi(x,\theta^*)|$ embedded trial datum.
	 \subsection{Quantum kernel matrix}
 After undergoing TQFM stage, one can prepare a well-conditioned kernel matrix and then train an SVM for classification.  For brevity, the inner product of  $|\psi(x_i,\theta^*)\rangle$  and   $|\psi(x,\theta^*)\rangle$ is represented by $k(x_i,x)$, and $M=M_1+M_{-1}$. The classification expression of a general quantum SVM is as follows
    \begin{align}
    y=\sign{f(x)}=\sign{\sum_{i=1}^M\frac{\alpha_iy_ik(x_i,x)}{\sqrt{M}}},\label{e3}
    \end{align}
where $\alpha_i$ are hyperparameters of the classification hyperplane and meet the requirements  $\alpha_i\geqslant0$ and $\sum_i\alpha_i^2=1$. When facing diverse big data, Eq.(\ref{e3}) faces the dilemma of poor distinguishability.

    \begin{theorem}
Assuming that the number of samples M is a large number and trial datum $x$ is randomly chosen, then $f(x)=\sum_{i=1}^M\frac{\alpha_iy_ik(x_i,x)}{\sqrt{M}}$ is scaling to $O(\frac{1}{\sqrt M})$.\label{th1}
    \end{theorem}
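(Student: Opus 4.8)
The plan is to treat $f(x)$ as a sum of $M$ terms, each of which is $O(1/\sqrt{M})$ in magnitude, and to argue that for a randomly chosen trial datum $x$ the terms do not conspire to add up coherently, so that cancellation keeps the total at the order of a single term rather than letting it grow like $\sqrt{M}$. First I would bound each summand: since $k(x_i,x)=|\langle\psi(x_i,\theta^*)|\psi(x,\theta^*)\rangle|^{p}$ (or without the power, depending on the convention fixed in Eq.~(\ref{e3})) is an overlap of normalized states, we have $0\le k(x_i,x)\le 1$; combined with $y_i=\pm1$ and $\alpha_i\ge 0$ this gives $|\alpha_i y_i k(x_i,x)|\le\alpha_i$, and hence each term $\alpha_i y_i k(x_i,x)/\sqrt{M}$ is at most $\alpha_i/\sqrt{M}$. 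Using the normalization $\sum_i\alpha_i^2=1$ and Cauchy--Schwarz, $\sum_i\alpha_i\le\sqrt{M}$, so the crude triangle-inequality bound only gives $|f(x)|\le 1$, which is not yet the claimed $O(1/\sqrt{M})$ scaling — it shows $f$ does not blow up, but a sharper (probabilistic) argument is needed for the decay.

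Next I would model the trial datum $x$ as random and look at the typical size of $f(x)$ rather than the worst case. Writing $g_i := y_i\,k(x_i,x)$, I would argue that as $x$ ranges over the data distribution the quantities $g_i$ behave like weakly correlated, roughly mean-zero fluctuations once the two classes are balanced (the positive-label overlaps and negative-label overlaps tend to cancel for a generic $x$, which is precisely the ``diverse big data'' regime invoked in the statement). Then $f(x)=\frac{1}{\sqrt{M}}\sum_i\alpha_i g_i$ is a weighted sum of such terms, and computing its second moment over the randomness of $x$ gives $\mathbb{E}_x[f(x)^2]=\frac{1}{M}\sum_{i,j}\alpha_i\alpha_j\,\mathbb{E}_x[g_ig_j]$. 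If the cross-correlations $\mathbb{E}_x[g_ig_j]$ for $i\ne j$ are small compared to the diagonal terms, the double sum is dominated by $\sum_i\alpha_i^2\,\mathbb{E}_x[g_i^2]\le 1$, so $\mathbb{E}_x[f(x)^2]=O(1/M)$, and Markov's inequality (or a Chebyshev-type bound) then yields $f(x)=O(1/\sqrt{M})$ with high probability over the choice of $x$. I would also record the complementary observation that when $x$ genuinely belongs to one of the classes the relevant overlaps are $\Theta(1)$ rather than fluctuating, so the decay is really a statement about generic/ambiguous trial points, and this is exactly the ``poor distinguishability'' phenomenon the theorem is meant to flag.

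The main obstacle is making the heuristic ``the $g_i$ are weakly correlated mean-zero fluctuations'' precise without importing heavy assumptions: strictly speaking $k(x_i,x)\ge 0$, so $g_i$ is not literally centered, and the cancellation relies on the label-weighted combination $\sum_i\alpha_i y_i k(x_i,x)$ having its positive and negative contributions nearly balanced — which in turn needs the class sizes (and the $\alpha_i$ mass on each class) to be comparable and the feature map to spread a random $x$ roughly symmetrically between the two clusters. I would handle this either by explicitly assuming a balanced, well-separated cluster structure (consistent with the TQFM training objective in Eq.~(\ref{e2}), which drives same-class overlaps toward $1$ and cross-class overlaps toward $0$) and treating the residual overlaps as i.i.d.\ bounded variables, or, more in the spirit of a physics paper, by a direct averaging argument: replace $|\psi(x,\theta^*)\rangle$ by a Haar-random (or data-distributed) state, use that for such a state $\mathbb{E}[\,|\langle\psi_i|\psi(x)\rangle|^2\,]=1/2^n$ and the overlaps for distinct $i$ are essentially uncorrelated, and read off $\mathbb{E}_x[f(x)^2]=O(1/M)$ directly. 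Either route delivers the $O(1/\sqrt{M})$ scaling; the write-up should state clearly which regularity/balance hypothesis on the data and on $\{\alpha_i\}$ is being used, since the bare hypotheses ``$M$ large'' and ``$x$ randomly chosen'' are what carry the argument.
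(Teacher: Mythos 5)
Your argument is correct and rests on the same underlying mechanism as the paper's own proof: for a randomly chosen trial datum $x$ the overlaps $k(x_i,x)$ behave as (approximately) independent random variables over the index $i$, so the label-weighted sum concentrates at the square root of the sum of the individual variances, i.e.\ at $O(1/\sqrt{M})$. The execution differs. The paper assumes $\alpha_i\propto 1/\sqrt{M}$ outright, observes that each summand then lies in $[-1/M,1/M]$, and invokes the Central Limit Theorem to conclude $f(x)\sim N(0,1/M)$; you instead compute the second moment $\mathbb{E}_x[f(x)^2]=\frac{1}{M}\sum_{i,j}\alpha_i\alpha_j\,\mathbb{E}_x[g_ig_j]$ and finish with Chebyshev. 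Your route buys two things the paper's does not: it uses only the stated normalization $\sum_i\alpha_i^2=1$ rather than the extra uniformity assumption $\alpha_i\propto 1/\sqrt{M}$, and it makes explicit the point the paper silently elides --- the overlaps $k(x_i,x)\geqslant 0$ are not mean-zero, so the centering of $f(x)$ at $0$ genuinely requires the class-balance/cancellation hypothesis you spell out. The one step you should tighten is the off-diagonal control: since $\sum_{i\ne j}\alpha_i\alpha_j$ can be as large as $M$, you need the centered cross-correlations $\mathbb{E}_x[g_ig_j]-\mathbb{E}_x[g_i]\mathbb{E}_x[g_j]$ to be $O(1/M)$, not merely ``small,'' and this should be stated as an explicit hypothesis or derived from the Haar/cluster model you sketch. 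With that caveat recorded, your version is at least as rigorous as the paper's proof, which is itself heuristic on exactly these two points.
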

    \begin{proof}
 Due to the randomness of $x$ and $M$ being a large number, we can deduce that $k(x_i,x)$ is independent and identically distributed with respect to $x_i$. Assuming $\alpha_i\propto\frac{1}{\sqrt M}$, we obtain that $\frac{\alpha_iy_ik(x_i,x)}{\sqrt{M}}$ is a distribution on the interval $[-\frac{1}{M},\frac{1}{M}]$. From the \textit{Central Limit Theorem}, it can be concluded that $f(x)$ roughly follows a normal distribution $N(0,\frac{1}{M})$.
    \end{proof}

      Performing a SWAP for $k$ times allows one to estimate the value to an accuracy $\pm\sqrt{|f(x)|(1-|f(x)|)k}$ \cite{seth}. As the absolute value of $f(x)$ decreases with $M$ increasing, maintaining the same accuracy of $f(x)$ requires more measurements. Unfortunately, the sign function is sensitive to values near 0. Ref.\cite{QSVM,nisq1} always encounter this challenge when processing big data, we will provide a solution to tackle this problem in the next section.

	\section{quantum support vector machine}\label{section4}
In the following, we analyze the advantages and disadvantages of SVM and LS-QSVM, and provide the detailed process of SV-QSVM algorithm.

    SVM is an efficient binary classification algorithm with high robustness, which can process high-dimensional data with the help of kernel functions, although it requires a complex training process for seeking out support vectors. In return, SVM has extremely high classification efficiency because it only relies on support vectors to classify new data. Equipped with a well-trained quantum kernel function, the classification error rate of SVM will be significantly reduced. Compared with quantum support vector machines, classical SVM has always been at a disadvantage in computing big data due to the lack of quantum parallelism. LS-QSVM overcomes the problem of computational scale, causing a decrease in distinguishability as the multiplication of samples due to treating all of them as support vectors.

We pick a compromise way---support vectors based variational quantum support vector machine---which can alleviate the matters of the lower computational scale of classical SVM and the poor distinguishability of quantum classifiers. The fundamental reason why SV-QSVM owns better distinguishability is that, like SVM, it only uses support vectors for classification, transforming the uniform superposition trial state related to sample size into support vectors dependent ones. This is equivalent to amplifying the amplitude of the trial state, that is, increasing the value of $f(x)$.

    \subsection{Support vector machine}
    To get a soft margin SVM in the Hilbert space $\mathcal{H}$, we consider the convex quadratic optimization program
	\begin{align}
		\min_{w,b,\xi}\ &\frac{1}{2}\|w\|^2+\frac{\gamma}{2}\sum_{i=1}^{M}\xi_i^2\label{e1}\\
		\rm{s.t.}\quad&y_i(\langle{w},\psi(x_i)\rangle_{\mathcal{H}}+b)\geqslant1-\xi_i,\notag
	\end{align}
	where $\xi_i\geqslant0$ are slack variables, $\gamma$ is a constant, $\{(x_i,y_i)\}_{i=1}^M\subseteq{\mathbb{R}}^n\times\{\pm1\}$ are training samples, and $w$ is a hyperplane in $\mathcal{H}$ that classifies data by decision function $y=\sign{\langle w,\psi(x)\rangle_{\mathcal{H}}+b}$. Note that the optimization program can be solved efficiently with a countable dimension of $\mathcal{H}$. However, once mapping to an exponential dimensional space, it becomes intractable to find the optimal hyperplane. The key issue that leads to the success of SVM and bypasses this obstacle is optimizing the dual program of Eq.(\ref{e1})
	\begin{align}
		\max_{\alpha}\ &\sum_{i=1}^M\alpha_i-\frac{1}{2}\sum_{i,j=1}^M\alpha_i\alpha_jy_iy_jk(x_i,x_j)-\frac{1}{2\gamma}\sum_{i=1}^M\alpha_i^2\label{e4}\\
		&{\rm{s.t.}}\quad\alpha_i\geqslant0,\quad\sum_{i=1}^M\alpha_iy_i=0, \notag
	\end{align}
	where $k(x_i,x_j)$ is the kernel function. As long as the kernel function can be effectively estimated, we can obtain a reasonable classifier by optimizing its dual form, giving an equivalent decision function with different expressions. Namely, $y=\sign{\sum_{i=1}^M\alpha_iy_ik\langle x_i,x\rangle+b}$, the bias $b$ can be replaced by $\sum_{i=1}^M\frac{\alpha_iy_i}{\lambda}$, and see Appendix \ref{apa} for the detailed derivation process.
    \subsection{Training of support vectors based quantum support vector machine}
    To make it applicable to the SV-QSVM model, we redraft the objective optimization function Eq.(\ref{e4}) in the following form
	\begin{align}
		L(\alpha)=\min_{\alpha}\ \frac{1}{2}\langle\alpha|K|\alpha\rangle-\||\alpha\rangle\|_1+Cl^2(\alpha),\label{e5}
	\end{align}
	where $K_{ij}$ is the entry of kernel matrix $K$ and $K_{ij}=y_iy_jk(x_i,x_j)+\delta_{ij}/{\gamma}$, $\||\alpha\rangle\|_1=\alpha_1+\alpha_2+...+\alpha_M$, $C$ is a penalty multiplier and $l(\alpha)=\sum_{i=1}^{M/2}\alpha_i-\sum_{M/2+1}^M\alpha_i$. The first constraint $\alpha_i\geqslant0$ can be automatically satisfied by limiting the angle of parameters, and the second constraint $\sum_{i=1}^M\alpha_iy_i=0$ is given as a squared penalty term.
	
	The first term of Eq.(\ref{e5}) is a variational quantum eigensolver(VQE) problem, a leading application for near-term devices \cite{vqe,vqe1,vqe2,vqe3,vqe4,vqe5}. VQE evaluates the ground state energy for a Hamiltonian which can be decomposed into a linear combination of unitary matrices by optimizing a PQC for generating the energy state. The most convenient representation basis is given by the Pauli operators. This is always efficient if the terms grow polynomially with qubit numbers. However, $K$ may not admit a Pauli decomposition within polynomial terms, i.e. the summation of terms grows exponentially with the system size. In this case, the random sampling trick (Appendix \ref{apb}) takes over the mission since persisting in using Pauli decomposition can cause inefficiency and therefore loss of quantum advantage \cite{md,md1}.

    The other two terms can be calculated by a Hadamard test circuit(FIG.\ref{fig2}). The probability of measurement on the ancilla qubit is
	\begin{align*}
		P_a=\frac{1}{2}[1+(-1)^a\langle0^{\otimes{m}}|H^{\otimes{m}}|U_{A}|0^{\otimes{m}}\rangle],
	\end{align*}
	with $a=0$ or $1$. The representation of $\||\alpha\rangle\|_1$ is given by $(P_0-P_1)$ multiplied by the dimensional factor $\sqrt{2^m}$. The formulation of $l(\alpha)$ can be acquired in the same way by adding the \textit{controlled-Z} gate.	
	
	\begin{figure}[t]
		\centering
		{\includegraphics[width=0.4\textwidth]{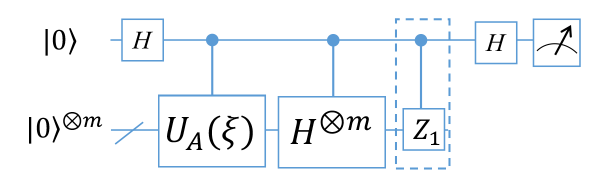}}
		\caption{Hadamard test for calculating $\||\alpha\rangle\|_1$(without C-Z gate) and $l(\alpha)$(with C-Z gate).  $U_A$ is PQC that prepares $|\alpha\rangle$, i.e. $|\alpha\rangle=U_{A}(\xi)|0\rangle^{\otimes{m}}$, with $m=\log M$. The subscript of \textit{Controlled-Z} gate represents the qubit applied.}\label{fig2}
	\end{figure}
    \subsection{Classification of support vectors based quantum support vector machine}
	The classification stage can be applied when finishing the training stage since the feedback of optimal $\xi^*$ is an essential prerequisite for classifying. Before building a classification circuit, we need to measure $U_{A}(\xi^*)$ to find out the positions of the support vectors, aiming at improving the distinguishability of trial data, as well as cutting down the circuit depth. The criterion for searching support vectors is whether the measurement frequency of the corresponding position is greater than a certain threshold, and non-zero values below this threshold are considered errors caused by noise.

	The classification result will be revealed by
	\begin{align*}
		y =\sign{f(x)}&=\sign{\langle\upsilon|\mu\rangle}=\sign{\sum_{i=1}^{M}\frac{\alpha_iy_ik(x_i,x)}{\sqrt{m_s}}}\\
|\mu\rangle&=\sum_{i=1}^M\alpha_iy_ic_i|i-1\rangle|\psi(x_i),\theta^*\rangle\\
|\upsilon\rangle&=\sum_{i=1}^M\frac{c_i|i-1\rangle|\psi(x,\theta^*)\rangle}{\sqrt{m_s}}
		%&=\sign{\langle-++...+|\langle\psi(x)|\alpha_i|i-1\rangle|\psi(x_i)\rangle},
	\end{align*}
	where $m_s$ is the number of support vectors, and $c_i=1$ if and only if $|i-1\rangle$ corresponds to the positions of the support vectors, otherwise $c_i=0$. Under the premise that the locations of support vectors are found, one could improve the distinguishability relatively by substituting the evenly weighted  superposition $\sum_{i=1}^M\frac{|i-1\rangle|\psi(x,\theta^*)\rangle}{\sqrt M}$ with a partial evenly weighted state $|\upsilon\rangle$, since support vectors only account for a small portion. A commonly used method for this issue is  amplitude amplification, which applies the Grover operator $G=H^{\otimes m}OH^{\otimes m}T$ to the evenly weighted superposition $R^G\leqslant\lceil\frac{\pi}{4}\sqrt{\frac{M}{m_s}}\rceil$ times and then returns a partially weighted state \cite{grover,grover1,grover2}, where $T={\rm diag}((-1)^{c_1},(-1)^{c_2},...,(-1)^{c_M})$ and $O={\rm diag}(2,0,0,...,0)-I_M$. Using partial superposition $|\upsilon\rangle$ for classification can improve the distinguishability of classification results, which can be simply understood as $\frac{1}{\sqrt m_s}>\frac{1}{\sqrt M}$.

    The SV-QSVM gains rewards by constructing partial evenly weighted states that LS-QSVM cannot achieve, as it applies displacement to all samples, regarding them as support vectors. Although the solution process has been accelerated, at the expense of losing the choice to utilize partial evenly weighted states, making it difficult to extract the symbol of $\langle\upsilon|\mu\rangle$.   There are two commonly used methods to calculate whether the inner product is positive or negative. One is the Hadamard test which needs to construct controlled-unitary. The other one draws support from swap test, but we need an ancilla to construct $|\tilde{a}\rangle=1/\sqrt2(|0\rangle|0...0\rangle+|1\rangle|a\rangle)$ ($a=\upsilon$ or $\mu$) before testing.
	%The final result can be obtained by feeding $|\tilde{\upsilon}\rangle$ and $|\tilde{\mu}\rangle$ to Figure \ref{f1a}.

	We reintroduce the fault tolerance classification equation
	\begin{align*}
		\sign{\langle\upsilon|\mu\rangle\mp\frac{1}{\sqrt k}}=\pm1.
	\end{align*}
to improve the robustness of the result and reduce the shot frequency \cite{mine}, where $k$ is the shot number.
		
	\subsection{Error analysis}
		\begin{figure*}[t]
		\centering
		\subfigure[]{\includegraphics[width=0.33\textwidth]{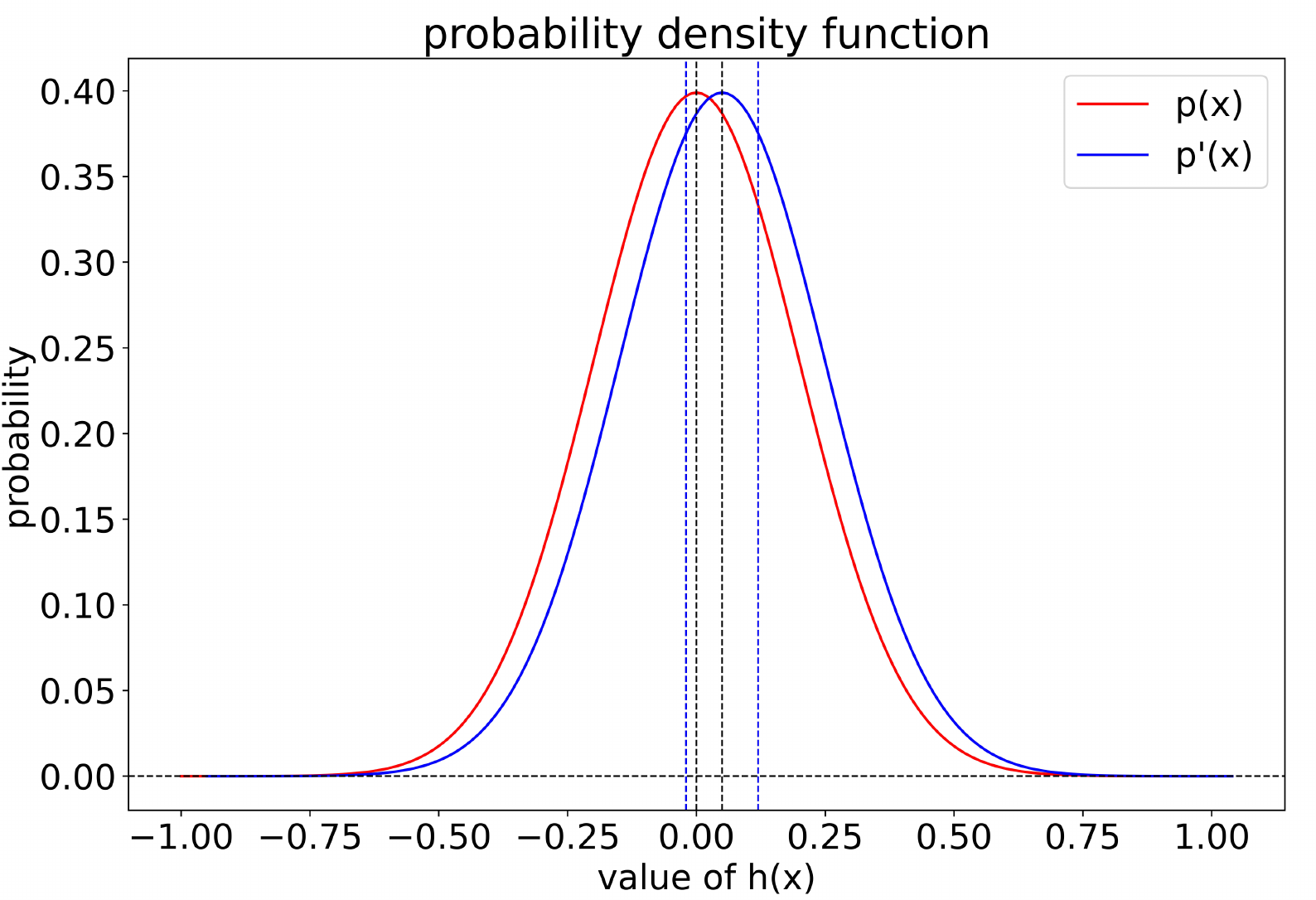}\label{fig3a}}
		\subfigure[]{\includegraphics[width=0.33\textwidth]{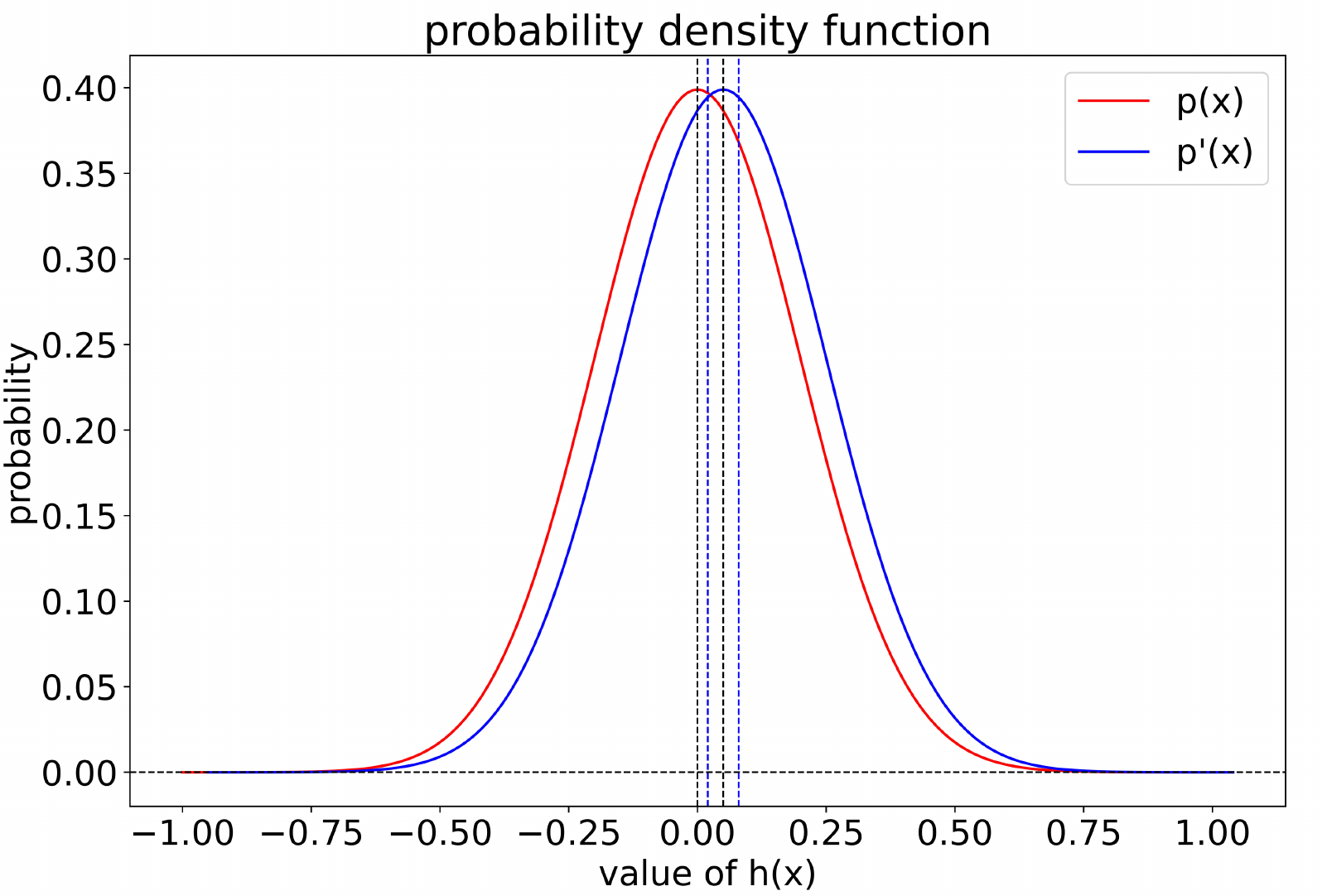}\label{fig3b}}
		\subfigure[]{\includegraphics[width=0.33\textwidth]{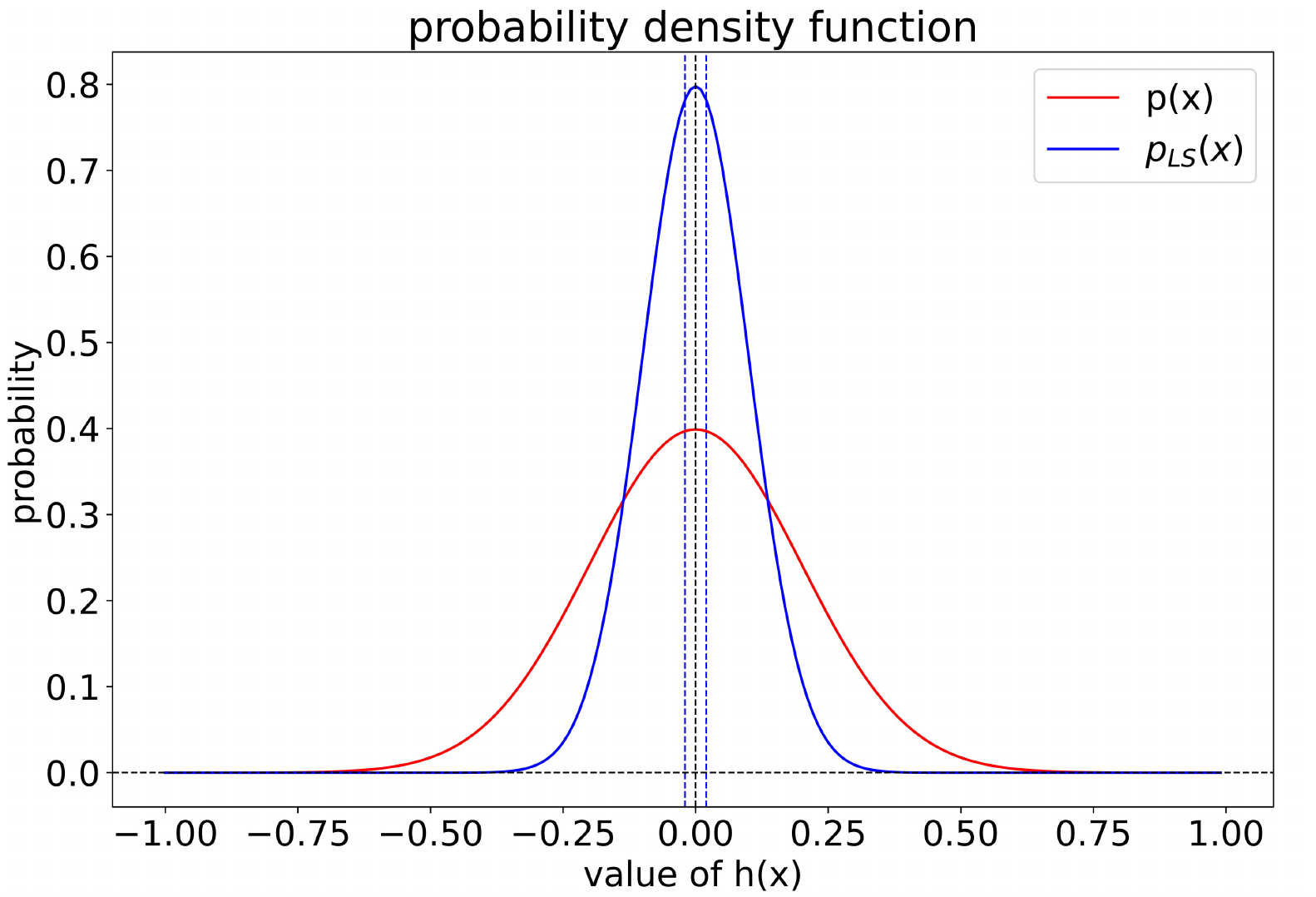}\label{fig3c}}
		\caption{Probability density functions in several different cases. The black dashed line represents the error between a noisy classifier and a noiseless classifier, while the blue dashed line represents the interval of the estimated value of the output of a noisy classifier. (a) Comparison between $p(x)$ and $p'(x)$ with $\varepsilon\geqslant\epsilon$. (b) Comparison between $p(x)$ and $p'(x)$ with $\varepsilon<\epsilon$. (c) Comparison between $p(x)$ and $p_{LS}(x)$. }
	\end{figure*}
	In this subsection, we rigorously analyze the error source of SV-QSVM and the advantages compared with LS-QSVM in \cite{QSVM}. For a brief review of LS-QSVM, see Appendix \ref{apc}. %For a brief review of LS-QSVM and Hc, see Appendix \ref{apc} and \ref{apd}, respectively. and Hadamard classifier(Hc) in \cite{nisq1}.
	\begin{lemma}\label{lemma1}{(Theorem 2.1 in \cite{lemma1})}  Let $x_0$ be the solution to the quadratic program
		\begin{align}
			\min\  &{\frac{1}{2}x^TKx-c^Tx}\label{5} \\
			\notag \rm{s.t.}\ &Gx\geqslant g,\\
			\notag &Dx = d,
		\end{align}
		where $K$ is noiseless and positive definite with the smallest eigenvalue $\lambda_{min}>0$. Let $K'$ be a noisy positive definite matrix such that  $\|K'-K\|_F\leqslant\varepsilon'<\lambda_{min}$. Let $x'_0$ be the solution to \rm{(\ref{5})} with $K$ replaced by $K'$. Then
		\begin{align*}
			\|x'_0-x_0\|_2\leqslant\frac{\varepsilon'}{\lambda_{min}-\varepsilon'}(1+\|x_0\|_2).
		\end{align*}
	\end{lemma}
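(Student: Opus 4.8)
The plan is to leverage the fact that the feasible region $\mathcal{F}=\{x:Gx\geqslant g,\ Dx=d\}$ is one and the \emph{same} closed convex set in both the noiseless and noisy programs, and that only the strongly convex quadratic objective changes. Writing $f(x)=\frac{1}{2}x^TKx-c^Tx$ and $f'(x)=\frac{1}{2}x^TK'x-c^Tx$, convexity makes the first-order stationarity condition over $\mathcal{F}$ both necessary and sufficient for optimality, so $x_0$ and $x_0'$ are characterized by the variational inequalities
\begin{align*}
\langle Kx_0-c,\ x-x_0\rangle\geqslant 0,\qquad \langle K'x_0'-c,\ x-x_0'\rangle\geqslant 0,
\end{align*}
holding for every $x\in\mathcal{F}$.

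First I would set $x=x_0'$ in the first inequality and $x=x_0$ in the second and add them; the terms in $c$ cancel, leaving $\langle Kx_0-K'x_0',\ x_0'-x_0\rangle\geqslant 0$. Splitting $Kx_0-K'x_0'=-K(x_0'-x_0)+(K-K')x_0'$ and rearranging then gives
\begin{align*}
\langle K(x_0'-x_0),\ x_0'-x_0\rangle\leqslant\langle (K-K')x_0',\ x_0'-x_0\rangle.
\end{align*}

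Next I would bound the two sides separately. On the left, positive definiteness of $K$ gives $\langle K(x_0'-x_0),x_0'-x_0\rangle\geqslant\lambda_{min}\|x_0'-x_0\|_2^2$. On the right, Cauchy--Schwarz together with $\|K-K'\|_2\leqslant\|K-K'\|_F\leqslant\varepsilon'$ gives $\langle (K-K')x_0',x_0'-x_0\rangle\leqslant\varepsilon'\|x_0'\|_2\|x_0'-x_0\|_2$. Assuming $x_0'\neq x_0$ (the equality case being trivial), I divide by $\|x_0'-x_0\|_2$ to obtain $\lambda_{min}\|x_0'-x_0\|_2\leqslant\varepsilon'\|x_0'\|_2$, then apply the triangle inequality $\|x_0'\|_2\leqslant\|x_0\|_2+\|x_0'-x_0\|_2$ and use $\varepsilon'<\lambda_{min}$ to move the $\|x_0'-x_0\|_2$ term to the left-hand side, reaching $(\lambda_{min}-\varepsilon')\|x_0'-x_0\|_2\leqslant\varepsilon'\|x_0\|_2\leqslant\varepsilon'(1+\|x_0\|_2)$, which is the claimed bound.

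The hard part is really just the first step: justifying the variational-inequality characterization. It needs $\mathcal{F}$ to be nonempty (so both programs are solvable), closed, and convex --- immediate, as $\mathcal{F}$ is an intersection of half-spaces with a hyperplane --- together with strong convexity of the objectives, which holds because $K$ and $K'$ are (symmetric) positive definite; strong convexity also guarantees that $x_0$ and $x_0'$ are unique, so ``the solution'' is well defined. Everything afterwards is routine norm bookkeeping, and the hypothesis $\varepsilon'<\lambda_{min}$ is exactly what makes the final division by $\lambda_{min}-\varepsilon'$ legitimate. (If the linear term $c$ were also perturbed by at most $\varepsilon'$, the same computation would produce the additive $1$ in $1+\|x_0\|_2$ directly; since only $K$ changes here, the $1$ is merely the harmless slack $\|x_0\|_2\leqslant 1+\|x_0\|_2$.)
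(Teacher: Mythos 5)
Your argument is correct. For the record, the paper does not prove this lemma at all: it imports it verbatim as Theorem 2.1 of Daniel's 1973 paper on the stability of definite quadratic programs, so there is no in-paper proof to compare against. Your variational-inequality route is the standard and essentially the same mechanism Daniel uses (optimality conditions over the fixed feasible set, monotonicity/strong convexity from $\lambda_{min}$, and a perturbation bound via $\|K'-K\|_2\leqslant\|K'-K\|_F$), specialized to the case where only the quadratic form is perturbed. The chain
\begin{align*}
\lambda_{min}\|x_0'-x_0\|_2^2\leqslant\langle K(x_0'-x_0),x_0'-x_0\rangle\leqslant\langle (K-K')x_0',x_0'-x_0\rangle\leqslant\varepsilon'\|x_0'\|_2\,\|x_0'-x_0\|_2
\end{align*}
is valid (using that $K$ is symmetric positive definite, as a kernel matrix is), and combined with $\|x_0'\|_2\leqslant\|x_0\|_2+\|x_0'-x_0\|_2$ and $\varepsilon'<\lambda_{min}$ it in fact yields the slightly sharper bound $\|x_0'-x_0\|_2\leqslant\frac{\varepsilon'}{\lambda_{min}-\varepsilon'}\|x_0\|_2$; you are right that the additive $1$ in $1+\|x_0\|_2$ is only needed in Daniel's more general statement, where the linear term $c$ (and the constraint data) may be perturbed simultaneously, and here it is absorbed as harmless slack. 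Your attention to the prerequisites --- nonemptiness and closed convexity of the feasible set, and uniqueness of the minimizers from strong convexity so that ``the solution'' is well defined --- covers exactly the points the paper leaves implicit by citing the result.
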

	
	\begin{lemma}
		Let $h(x)=\sum_{i=1}^{m_s}\frac{\alpha_iy_ik(x,x_i)}{\sqrt {m_s}}$ and $h'(x)=\sum_{i=1}^{m_s}\frac{\alpha'_iy_ik(x,x_i)}{\sqrt {m_s}}$ to be the noiseless/noisy classifier,
		\begin{align*}
			|h'(x)-h(x)|\leqslant\epsilon
		\end{align*}
		the measurement shots for kernel matrix $K$ scaling as $O(M^4/\epsilon^{2})$.
	\end{lemma}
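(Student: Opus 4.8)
The plan is to propagate the statistical error of the measured kernel matrix through the dual program that produces the weights $\alpha$ and then into the decision function, using Lemma~\ref{lemma1} as the bridge; note that $h$ and $h'$ share the same kernel evaluations $k(x,x_i)$ and differ only through the trained weights, so the entire discrepancy is carried by $\alpha'-\alpha$. Suppose each of the $M^2$ entries of $K$ is estimated from $s$ shots. Since every entry has the form $y_iy_jk(x_i,x_j)+\delta_{ij}/\gamma$ with $k(x_i,x_j)\in[0,1]$ a measurable overlap and $\delta_{ij}/\gamma$ deterministic, the empirical entry $K'_{ij}$ is unbiased with variance $O(1/s)$ (and concentrates by Hoeffding), so $\mathbb{E}\,\|K'-K\|_F^2=\sum_{i,j}\mathrm{Var}(K'_{ij})=O(M^2/s)$ and hence $\|K'-K\|_F\leq\varepsilon'$ with constant probability once $s=O(M^2/\varepsilon'^2)$; demanding the bound for all entries simultaneously with high probability costs only an extra $\log M$ factor via a union bound. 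Moreover, the deterministic regularization $\delta_{ij}/\gamma$ together with the rank-one positive semidefinite contribution of the penalty $Cl^2(\alpha)$ to the Hessian in Eq.~(\ref{e5}) keep the perturbed quadratic form positive definite with $\lambda_{min}$ bounded below by a hyperparameter-controlled constant (at least $1/\gamma$) independent of $M$, and $\|K'-K\|_F$ is exactly the perturbation of the matrix that Lemma~\ref{lemma1} tracks.

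Next I would apply Lemma~\ref{lemma1} with $x_0=\alpha$, $x'_0=\alpha'$, linear term $c=\mathbf{1}$ (from $-\||\alpha\rangle\|_1$), inequality constraints $\alpha_i\geq0$, and equality constraint $\sum_i\alpha_iy_i=0$, giving $\|\alpha'-\alpha\|_2\leq\frac{\varepsilon'}{\lambda_{min}-\varepsilon'}(1+\|\alpha\|_2)$; with $\|\alpha\|_2=O(1)$ (unit-normalized for the classification stage) and in the regime $\varepsilon'\leq\lambda_{min}/2$ this reads $\|\alpha'-\alpha\|_2=O(\varepsilon')$. A single Cauchy--Schwarz step then collapses the classifier gap onto this quantity:
\begin{align*}
|h'(x)-h(x)|=\Bigl|\sum_{i=1}^{m_s}\frac{(\alpha'_i-\alpha_i)y_ik(x,x_i)}{\sqrt{m_s}}\Bigr|\leq\frac{\|\alpha'-\alpha\|_2}{\sqrt{m_s}}\Bigl(\sum_{i=1}^{m_s}y_i^2k(x,x_i)^2\Bigr)^{1/2}\leq\|\alpha'-\alpha\|_2,
\end{align*}
where the final inequality uses $y_i^2=1$ and $0\leq k(x,x_i)\leq1$, so $\sum_{i=1}^{m_s}y_i^2k(x,x_i)^2\leq m_s$ and the $\sqrt{m_s}$ factors cancel. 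Chaining the two bounds gives $|h'(x)-h(x)|=O(\varepsilon'/\lambda_{min})$, so $|h'(x)-h(x)|\leq\epsilon$ is guaranteed as soon as $\varepsilon'=\Theta(\lambda_{min}\epsilon)=\Theta(\epsilon)$. Substituting into $s=O(M^2/\varepsilon'^2)=O(M^2/\epsilon^2)$ and multiplying by the $M^2$ entries yields a total shot count of $O(M^4/\epsilon^2)$, as claimed.

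The main obstacle --- really the only delicate point --- is matching Eq.~(\ref{e5}) to the template of Lemma~\ref{lemma1}: one must verify that the measurement noise acts only on the kernel block (the regularization and penalty pieces being deterministic), that these pieces keep $\lambda_{min}$ bounded below by a quantity that does not decay with $M$ (otherwise the $\varepsilon'\mapsto\epsilon$ conversion would degrade the scaling), and that the feasible set is untouched by the noise so that Lemma~\ref{lemma1} applies verbatim. Everything else is routine: the variance-to-shots conversion, the uniform allocation of shots across the $M^2$ entries, the $\sqrt{m_s}$ cancellation, and the optional union bound --- these affect only constants and logarithmic factors, not the stated $O(M^4/\epsilon^2)$ scaling.
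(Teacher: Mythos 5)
Your proposal is correct and follows essentially the same route as the paper's own proof: $O(M^2/\varepsilon'^2)$ shots per entry to control $\|K'-K\|_F$, the $\lambda_{min}\geqslant 1/\gamma$ lower bound from the regularization term, Lemma~\ref{lemma1} to convert the matrix perturbation into $\|\alpha'-\alpha\|_2=O(\varepsilon')$, a Cauchy--Schwarz step to bound $|h'(x)-h(x)|$ by $\|\alpha'-\alpha\|_2$, and multiplication by the $O(M^2)$ entries to reach $O(M^4/\epsilon^2)$. You merely fill in details the paper delegates to citations (the variance-to-Frobenius-norm accounting) or leaves implicit (the matching of Eq.~(\ref{e5}) to the quadratic-program template), so no substantive difference in approach.
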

	\begin{proof}
		To guarantee $\|K'-K\|_F\leqslant\varepsilon'$,   each matrix entry should take $O(M^2\varepsilon'^{-2})$ shots \cite{qfm1}. For each new datum $x$,
		\begin{align*}
			|h'(x)-h(x)|&=|\sum_{i=1}^{m_s}\frac{\alpha_iy_ik(x,x_i)}{\sqrt {m_s}}-\sum_{i=1}^{m_s}\frac{\alpha'_iy_ik(x,x_i)}{\sqrt {m_s}}|\\
			&\leqslant\sum_{i=1}^{m_s}|\frac{\delta_ik(x,x_i)}{\sqrt {m_s}}|\\
			&\leqslant\|\delta\|_2,
        \end{align*}
with $\delta_i=\alpha'_i-\alpha_i$. The upper bound of $\|\delta\|_2$ depends on the minimum eigenvalue of $K$. Due to the positive definiteness of kernel matrix $K$, we have $\lambda_{min}(K+\frac{1}{\gamma}I)\geqslant\frac{1}{\gamma}$ is lower bounded by a constant \cite{nisq}. Then Lemma \ref{lemma1} gives
        \begin{align*}
			\|\delta\|_2\leqslant O(\frac{M}{\sqrt R})\leqslant\epsilon.
		\end{align*}
		 Due to the symmetry of the kernel matrix and the trivial diagonals, $\frac{M^2-M}{2}$ matrix entries have to be estimated. Therefore, a total of $O(M^4/{\epsilon^2})$ shots is sufficient for restricting $|h'(x)-h(x)|\leqslant\epsilon$.
	\end{proof}
	
	Assuming trial datum $x$ is taken from a uniform distribution and abundant, and $p(x)$  be the probability density function of $h(x)$ that satisfying
	\begin{align*}
		\int_{-1}^1p(x)dx=1,
	\end{align*}
	and $p'(x)$ corresponds to $h'(x)$. Due to statistical errors during the quantum kernel estimation phase, the classifier $h'(x)$ we use has a maximum error $\epsilon$ compared to the ideal classifier $h(x)$ which leads to a maximum misclassification rate $\int_0^{\epsilon}p(x)dx$ (integral between two black dashed lines of FIG.\ref{fig3a}).
	
	The aforementioned analysis is under the condition that possessing the exact value of $h'(x)$, however, we merely have access to an estimated approximate value $\widetilde h'(x)$. %And the scale of the number of measurements required to estimate $h'(x)$ with error $\varepsilon$ is $O(\varepsilon^{-2})$.
	In general, the accuracy requirement of the training stage is higher than classification, as slightly incorrect classifiers can lead to unpleasant classification accuracy. We provide a theoretical analysis below.

		In the case $\varepsilon\geqslant\epsilon$ (FIG.\ref{fig3a}),
		\begin{align*}
			1-\int_{-\varepsilon}^{\varepsilon}p(x)dx
		\end{align*}
		is the probability of correct classification, and $\int_{-\varepsilon}^{\varepsilon}p(x)dx$ is inconclusive. However, with the increase of measurement shots, i.e. $\varepsilon<\epsilon$ (FIG.\ref{fig3b}), this will give the correct classification rate
		\begin{align*}
			1-\int_{-\epsilon}^{\varepsilon}p(x)dx,
		\end{align*}
		with $\int_{-\varepsilon}^{\varepsilon}p(x)dx$ inconclusive and $\int_{-\epsilon}^{-\varepsilon}p(x)dx$ wrong. The above analysis concludes that the benefits of increasing measurement shots during the classification stage are relatively minimal since the accuracy of classification is always affected by the errors $\epsilon$ stemming from quantum kernel estimation.

	As the scale of data increases, no matter what kind of quantum classifiers, the probability density function becomes increasingly sharp as long as the query state is evenly weighted (blue line in FIG.\ref{fig3c}). Some of these QSVMs can circumvent this dilemma by constructing the partially evenly weighted trail state---only for support vectors. Since the inconclusive rate $\int_{-\varepsilon}^{\varepsilon}p(x)dx<\int_{-\varepsilon}^{\varepsilon}p_{LS}(x)dx$, building up query state according to support vectors is necessary.
	
\section{Quantum iterative multi-classifier}\label{section5}
    In this section, we analyze the essence of multi-classification algorithms and propose a quantum iterative multi-classifier framework.
    \subsection{Principle analysis and preparation}
    Classification problems are of great significance, especially multi-classification. The vast majority of multi-classifiers are derived from the generalization of binary classification. One can use one-versus-one or one-versus-rest strategies to transform binary classifiers into multi-classification.

    When applying one-versus-one to $L$ classification, one needs to prepare $\frac{L(L-1)}{2}$ binary classifiers for each pair of different categories. There is a voting session after each classifier, and the eventual label is the one that earns the most votes. Adapting to the ensemble model, the above process can be simplified as finding the biggest trace distance between the $L$ ensembles and the trial density matrix, that is
   \begin{align*}
\arg\max \limits_{j\in L}\{{\rm Tr}(\rho^j\rho)|\rho^j=\sum_ip_i^j|\psi(x_i^j,\theta^*)\rangle\langle\psi(x_i^j,\theta^*)|,\\ \rho=|\psi(x,\theta^*)\rangle\langle\psi(x,\theta^*)|\}.
   \end{align*}

    When adopting a one-versus-rest approach, $L$ classifiers are sufficient, and each one separates $y_j$ from the others. The final label is given by the unique classifier that chooses $y_j$ as positive. The above behavior can be explained as finding out $j\in L$ that satisfies
    \begin{align*}
    (\rho^j\rho)^{ovr}={\rm Tr} (\rho^j\rho)-\frac{1}{L-1}\sum_{k=1,k\ne j}^L{\rm Tr}(\rho^k\rho)>0.
    \end{align*}
The one-versus-rest model can also be adapted to SVM directly. More generally, no matter one-versus-one or one-versus-rest is chosen, the classification result can be interpreted as finding the largest element in sets $\{{\rm Tr}(\rho^j\rho)\}_{j=1}^L$ and $\{(\rho^j\rho)^{ovr}\}_{j=1}^L$.
    \subsection{Algorithm Flow}
    A trivial option is to calculate each element in turn, then the answer will be revealed naturally. However, this strategy is costly because calculating each element requires a large amount of measurement. Another artful way to read out the classification result is amplitude amplification as Grover search algorithm owns the capability to flip amplitude to the target qubit, then the qubit can be read out with a high probability---usually greater than 50\%. To complete the above task, one needs to store all elements in the set proportionally in the amplitude of the quantum state. That is to say, $U_L$, $U_x$  and their inversion need to be prepared
    \begin{align}
    &U_L|0...0\rangle=\sum_{j=1}^L\frac{1}{\sqrt L}|j-1\rangle\sum_{i=1}^{M_j}\sqrt{p_i}|i-1\rangle|\psi(x^j_i,\theta^*)\rangle,\label{e8}\\
    &U_x|j-1\rangle|0...0\rangle=\sum_{j=1}^L|j-1\rangle\sum_{i=1}^{M_j}\frac{|i-1\rangle|\psi(x,\theta^*)\rangle}{\sqrt{M_j}},\label{e9}\\
    &U_x^{\dagger}U_L|0...0\rangle=|\phi_0\rangle=|\phi_1,...,\phi_2,...,...,\phi_L,...\rangle,\notag
    \end{align}
where the classification result $\{{\rm Tr}(\rho^j\rho)\}_{j=1}^L$ is stored in $\{\phi_j\}_{j=1}^L$ proportionately. For the convenience of explanation, we introduce a swap matrix $S$ such that $S|\phi_0\rangle=|\phi\rangle=|\phi_1,\phi_2,...,\phi_L,{\rm junk}\rangle$.

    The workflow of a single Grover's search algorithm is as follows:(i)implement a phase inversion operator $T={\rm diag}(-1_1,-1_2,...,-1_L,1_{L+1},1,...,1)$ to invert the first $L$ phases of $|\phi\rangle$; (ii) apply $U_G^{\dagger}$ to $T|\phi\rangle$, $U_G=SU_x^{\dagger}U_L$; (iii) reverse the state by $|0...0\rangle$, that is, implement an operator $O={\rm diag}(1,-1,-1,...,-1)$; (iv) apply $U_G$ to step(iii). It should be noted that here $U_G|0...0\rangle=|\phi\rangle$. Repeat the above operations for $R\in[1,\frac{\sqrt{5M_j}\pi-1}{2}]$ times roughly, the phases of classification results can be amplified. The proof of the number of iterations is provided in Appendix \ref{apd}.

	\section{Numerical  verification}\label{section6}
\begin{figure*}[t]
		\centering
        \subfigure[]{\includegraphics[width=0.4\textwidth]{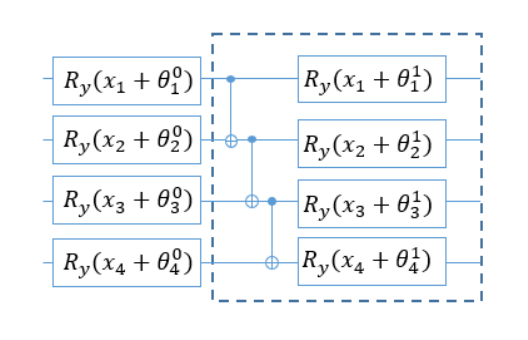}\label{fig4a}}\subfigure[]{\includegraphics[width=0.5\textwidth]{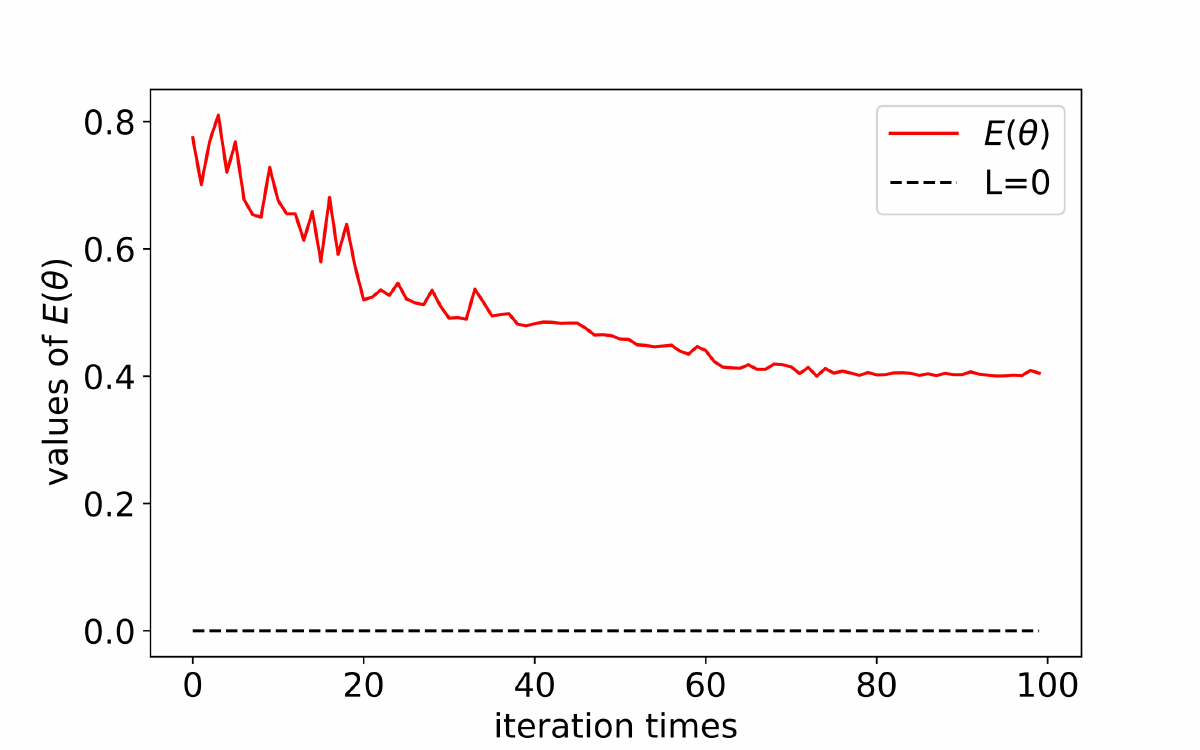}\label{fig4b}}
		\subfigure[]{\includegraphics[width=0.4\textwidth]{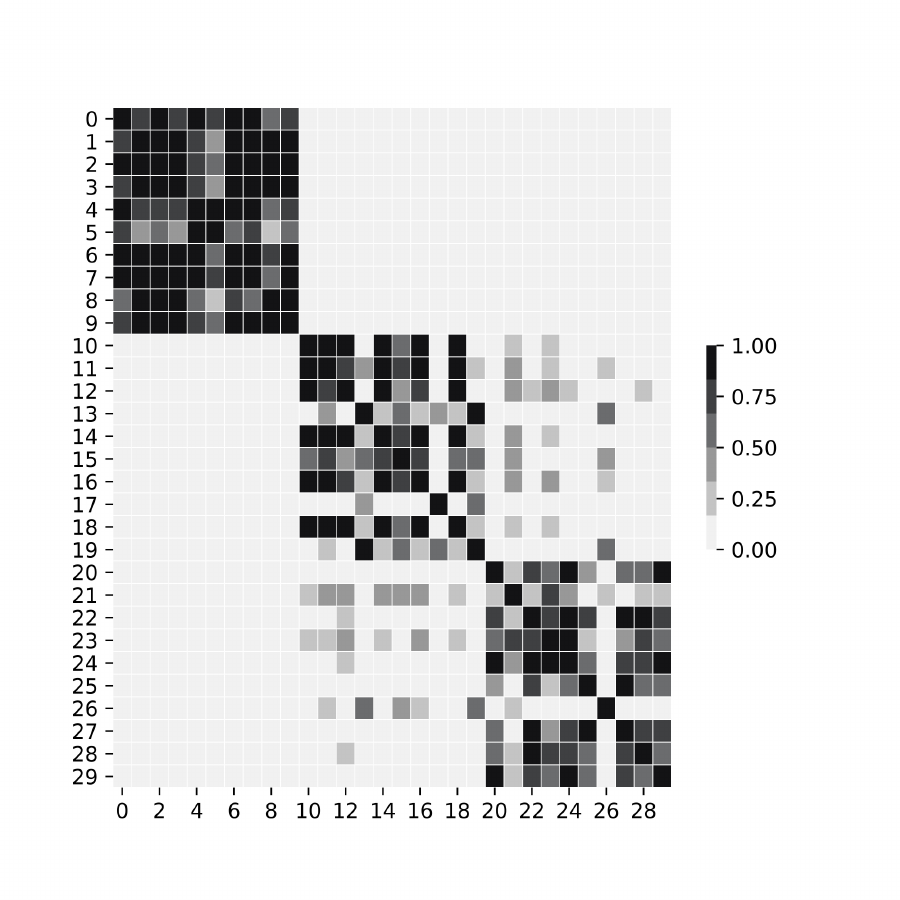}\label{fig4c}}\subfigure[]{\includegraphics[width=0.4\textwidth]{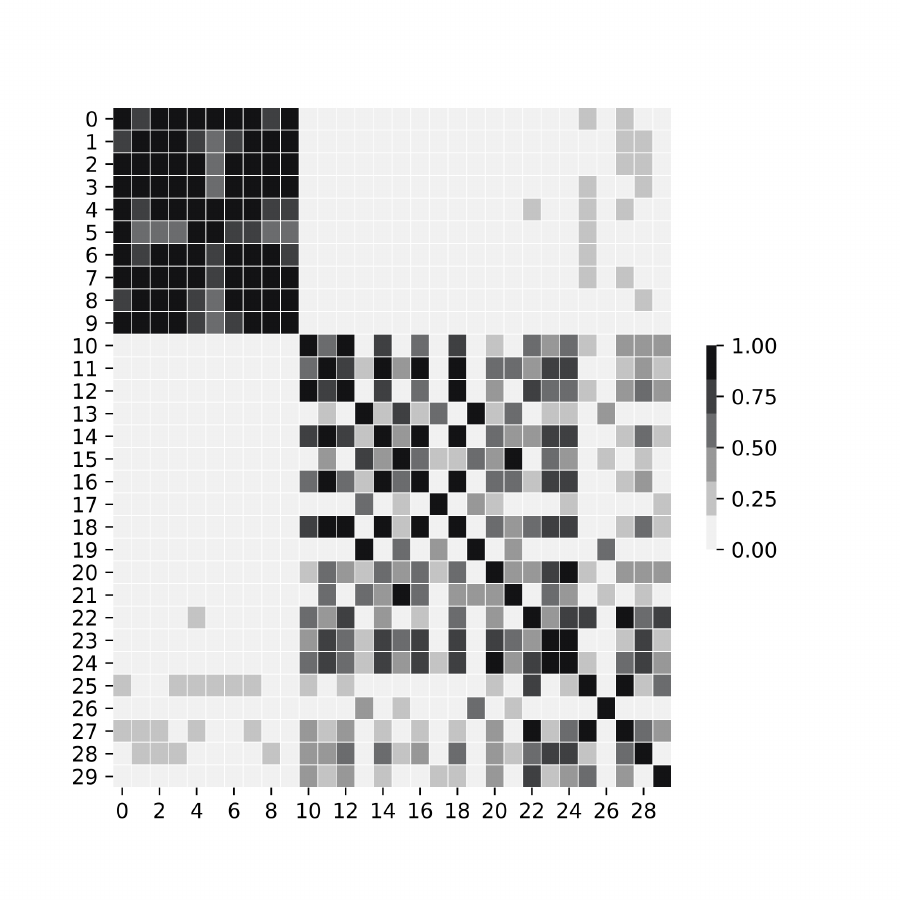}\label{fig4d}}
		\caption{ Numerical {verification} of trainable quantum feature mapping. (a)A basic TQFM framework used in the simulations. (b) The iteration value of Eq.(\ref{e9}), and the starting point is $\theta =0$. (c,d) Visualization of overlaps between training samples. (c) is the kernel matrix of training data after training, i.e. $\theta=\theta^*$, and (d) is before training, i.e. $\theta=0$. }
	\end{figure*}
In this section, we carry out the corresponding numerical simulations for the proposed algorithms to validate the feasibility in \textit{Python qiskit} package. We target the IRIS data set with $L=3$ labels and 4 features, and each class has 50 samples.
    In the classical optimization stage, taking the presence of measurement uncertainty into consideration, we adopt \textit{COBYLA} optimizer for convenience. Quantum gradient optimization methods can also be a good choice \cite{qgd1,qgd2} as one could evaluate the derivatives of quantum expectation values from the output of variational quantum circuits.
    \subsection{Trainable quantum feature mapping}

To verify the feasibility and effectiveness of trainable quantum feature mapping, we sample 10 data points from each class as training data. Each feature plus an adjustable parameter is compiled into the amplitude of a parameterized quantum circuit. The layout of this simulation is shown in FIG.\ref{fig4a}, the dashed box(also known as a layer) comprises entanglement gates and the re-uploading gates. A better clustering effect can be achieved through training the circuit. In this simulation,the loss function becomes
    \begin{align}
    E(\theta)=1-\frac{1}{3}\sum_{j=1}^3\frac{1}{10}\sum^{10}_{i=1}|\langle\psi(x^j_i,\theta)|y_j\rangle|^2,\label{e9}
    \end{align}
where $|y_j\rangle=|j-1\rangle|\cdot\rangle$. 	

    The iterative process of optimizing function values is shown in FIG.\ref{fig4b}. Only 2-layers of the repetition parts and the simple $R_y$ rotations are used, leading the value of the loss function to only reach 0.4. We can achieve better training effects by increasing the complexity of the circuit, such as increasing the depth of the circuit and using $R_z$-$R_y$-$R_z$ encoding. We will not delve too deeply into this matter.

The iteration of the loss function is completed with the optimal parameters $\theta^*$. To present the training effects more directly, we visualized the overlaps between training data by drawing the hotspot matrices(FIG.\ref{fig4c},\ref{fig4d}). The training process further orthogonalized Class 1 with Class 2, 3. What is more obvious is that Class 2 and Class 3, which were originally difficult to distinguish, have significantly improved their distinguishability after training. It can be seen that using the trained kernel matrix for machine learning algorithms, such as SVM, will definitely outperform the untrained one in terms of performance.
    \subsection{Explicit application of training quantum feature mapping}
     For simplicity, we perform a binary classification validation and choose \textit{explicit approach} as a comparison. We choose Class 2 and Class 3 for the {simultion} because they are linearly inseparable, which better tests the performance of the classifier. We adopt the feature map $U_{\Phi(x)}=U_{\phi(x)}H^{\otimes n}U_{\phi(x)}H^{\otimes n}$ defined in \cite{qfm1} to generate state and
    \begin{align*}
    U_{\phi}(x) &= \bigotimes_{i=1}^nR_z(x^{(i)}).
    \end{align*}
Both the layouts of $W(\theta)$ and TQFM employ the $R_z$-$R_y$-$R_z$ rotations, and $R_y$ is taken for TQFM to be an object of reference. We select 10 samples for training, then classify all 100 data and conduct {simultions} with different layers. The starting point is random, thus we conduct 5 {simultions} to take the average value to eliminate the interference of randomness.

    From the {simultion} results as shown in Table \ref{t1}, we deduce that TQFM has stronger classification performance than \textit{explicit approach} in \cite{qfm1}, of course, we cannot rule out the dependence on data. But this at least reveals the advantages of TQFM in certain aspects, that is, TQFM can choose to use multiple layers to prepare the kernel matrix for classification, in order to further improve classification performance. Meanwhile, TQFM can also achieve a considerable success rate(over 80\%) without introducing entanglement gates.
\begin{table}[b]
\caption{A comparison of explicit applications. SR and $E$ mean success rate and loss function, respectively.  The lowercase letters y and z represent Pauli rotation used in the validation, and $r$ represents the number of layers. }\label{t1}
\begin{tabular}{lllllllll}
\toprule %$\backslash$
&\multicolumn{2}{c}{$r=0$}&&\multicolumn{2}{c}{$r=1$}&&\multicolumn{2}{c}{$r=2$}\\
\makecell[c]&SR(\%)&$E$&&SR(\%)&$E$&&SR(\%)&$E$\\
\midrule
\makecell[l]{TQFM(y)}&68&0.41&&81.6&0.31&&86&0.24\\
\makecell[l]{TQFM(zyz)}&83.2&0.34&&86&0.25&&91.8&0.21\\
\makecell[l]{QFM(zyz)}&63.6&0.45&&84.4&0.34&&89.8&0.25\\
\bottomrule
\end{tabular}
\end{table}

As a promotion, we prepare an ensemble model to test the subsequent use effect of TQFM that employs $R_y$ rotations. We use the optimal parameters $\theta^*$ obtained from Table \ref{t1}, and select 4 samples to prepare the ensemble model, then classify 100 data. The average classification success rates for $r$ = 0, 1, and 2 are 92\%, 91.8\%, and 94.8\%, respectively. It can achieve a classification accuracy of up to 95\% when using only 1/5 of the training samples with the simple use of $R_y$ rotation.

    \subsection{Variational quantum support vector machine}
	\begin{figure}[t]
		\centering
        {\includegraphics[width=0.5\textwidth]{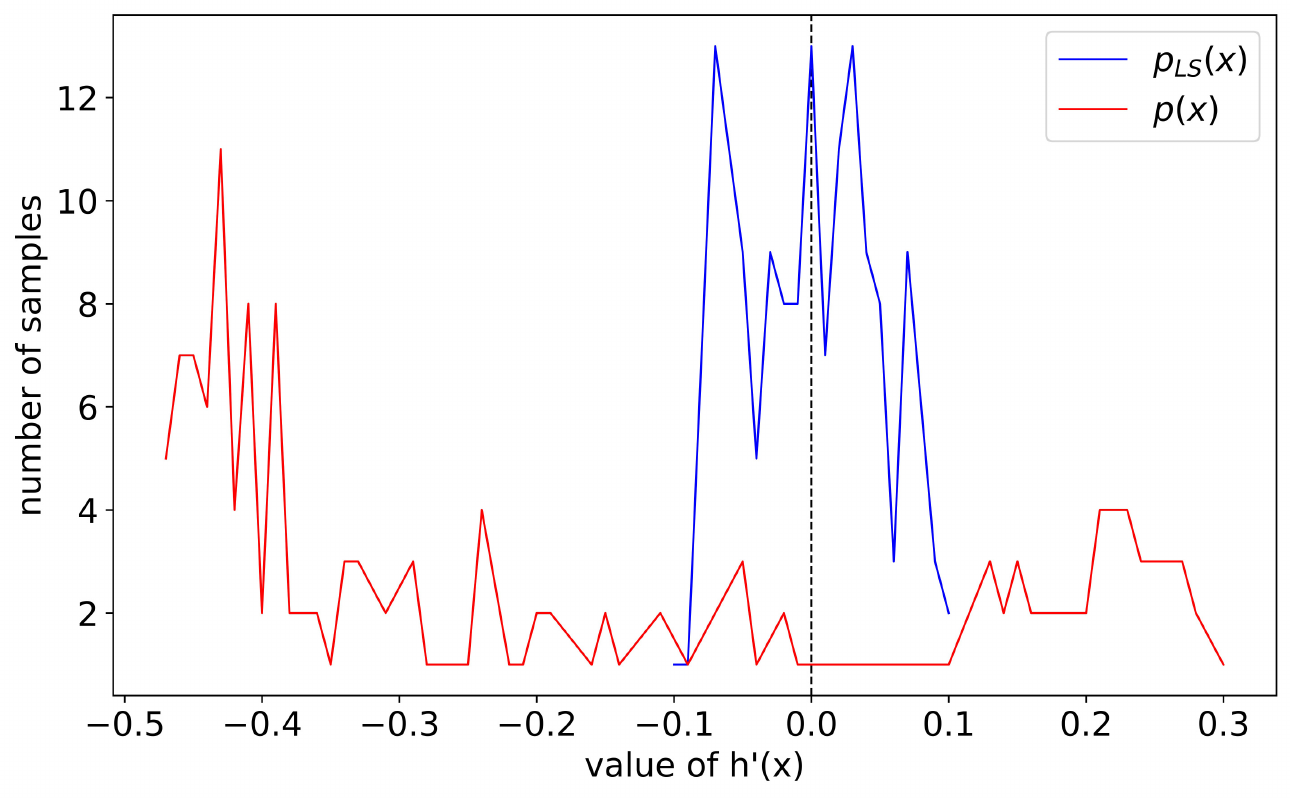}}
		\caption{Sample density distribution where the function value is rounded to two decimal places. Values greater than 0 are positive, while values less than 0 are negative.}\label{fig5}
	\end{figure}
     For comparison, we conduct two numerical simulations to demonstrate the distinguishability of SV-QSVM and choose LS-QSVM \cite{QSVM} as a representative of quantum classifiers that employ evenly weighted superposition during classification. 	At present, HHL algorithm can only be executed for several qubits, so we choose a variational quantum algorithm as an alternative \cite{mine}.

    Because the focus is on comparing branching, we only select 8 samples (4 for Class 1, 2 for Class 2, 2 for Class 3, Class 1 is the positive class, while the other two classes are negative) for training, and set the bias to be 0. When classifying, we test all 150 data points, and their classification distribution diagram is shown in FIG.\ref{fig5}.   To approximate the value of $h(x)$ as accurately as possible, we set 4095 measurements. It is obvious that SV-QSVM has much greater distinguishability than LS-QSVM, which not only improves distinguishability but also reduces error rates, with actual accuracy rates of 100\% and about 92\%, respectively. Exporting the results for analysis, we find that the reason why the accuracy of the latter always cannot reach 100\% is that the values located in the interval $[-0.05,0.05]$ where the sign function is prone to errors account for a high proportion.
     \subsection{Quantum iterative multi-classifiers}

In the numerical validation of multi-classifier, we conduct the one-versus-one strategy as an example. We take 12 samples (4  per class) for classification and omit the training stage because assigning equal weights to these samples can also be used as a trivial training result. The Eq.(\ref{e8}) of this simulation is
    \begin{align*}
    U_L|0...0\rangle=\sum_{j=1}^3\frac{1}{\sqrt 3}|j-1\rangle\sum_{i=1}^4\frac{1}{2}|i-1\rangle|\psi(x_i^j,\theta^*)\rangle.
    \end{align*}

	\begin{figure}[t]
		\centering
        {\includegraphics[width=0.5\textwidth]{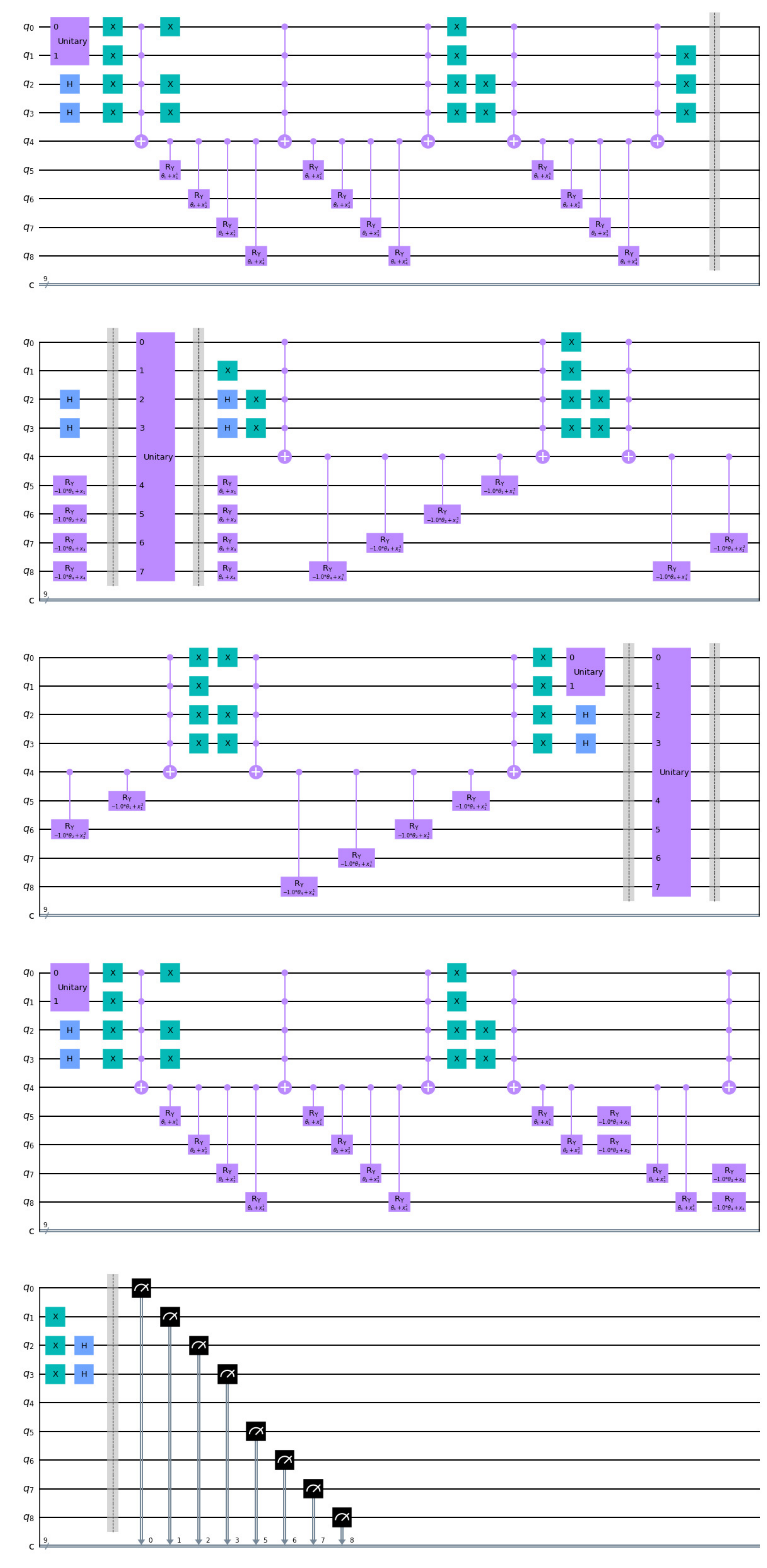}}
		\caption{Circuit diagram of quantum  iterative multi-classifier with 1 iteration. $^0_1\rm{Unitary}|00\rangle=(\frac{\sqrt 3}{3},\frac{\sqrt 3}{3},\frac{\sqrt 3}{3},0)$. The functions of the two $^{0,1,2,3}_{4,5,6,7}\rm{Unitary}$ are described by the third and fifth blocks. }\label{fig6}
	\end{figure}	
The schematic diagram of the {simulation} is shown in FIG.\ref{fig6}. For the sake of simplicity, we only reserve 3 samples. There are a total of 6 barriers that divide the circuit into 7 blocks.  $U_L$ is the first block, and $U_x$, which stores trial data, is the second block. The function of the third block is to mark the target phase, which is to flip the amplitude. The fourth block is the inverse of $U_xU_L$. The fifth block is a flipping operator, that is flipping all phases around $|0...0\rangle$. The sixth block is a repetition of blocks 1 and 2, and the last block is for measurement. There are a total of 9 qubits, with $q_0$ and $q_1$ serving as labels, $q_2$ and $q_3$ storing the amplitude of sample data, $q_4$ being a working qubit, and $q_5$-$q_8$ being the quantum feature mapping.

\begin{table}[t]
\centering
\caption{The probabilities and sample distribution of reading out IRIS data set with 10000 shots for each sample. r is the layers used in the TQFM stage. The column of $<$0.5 excludes the erroneous data.}\label{t2}
\begin{tabular}{llllllllll}
\toprule %$\backslash$
Classes&\multicolumn{3}{c}{Before iteration}&\multicolumn{3}{c}{After iteration}&\multicolumn{3}{c}{Sample distribution}\\
\makecell[l]&max&min&ave&max&min&ave&$>$0.5&$<$0.5&error\\
\midrule
\makecell[l]1(r=0)&0.33&0.18&0.30&0.96&0.85&0.91&50&0&0\\
\makecell[l]2(r=0)&0.29&0.09&0.23&0.65&0.23&0.47&20&29&1\\
\makecell[l]3(r=0)&0.30&0.10&0.24&0.73&0.22&0.39&11&33&6\\
\makecell[l]1(r=1)&0.33&0.12&0.28&0.96&0.66&0.81&50&0&0\\
\makecell[l]2(r=1)&0.30&0.09&0.23&0.95&0.39&0747&44&6&0\\
\makecell[l]3(r=1)&0.28&0.07&0.21&0.73&0.22&0.39&31&9&10\\
\makecell[l]1(r=2)&0.32&0.12&0.26&1.00&0.71&0.94&50&0&0\\
\makecell[l]2(r=2)&0.26&0.01&0.18&0.97&0.08&0.76&44&5&1\\
\makecell[l]3(r=2)&0.26&0.02&0.19&1.00&0.14&0.73&42&1&7\\
\bottomrule
\end{tabular}
\end{table}

    In the classification validation stage, we classified all 150 data with a success rate of 95.3\%, 93.3\%, and 94.7\%, respectively. The detailed distribution results are shown in TABLE \ref{t2}. We speculate that there may be two possible reasons why the classification success rate did not reach 100\%: (i) There is too little {simulated samples} we used, and only 4 samples in each class. Nonetheless, it exceeds the success rate of 92.5\% in Ref.\cite{qmc} (ii) The trainable quantum feature mapping structure we used is too simple and does not cluster perfectly during the training phase. From Table \ref{t2}, we can see that the amplitude can be significantly improved through iteration. The proportion of probability exceeding 0.5(corresponding to an amplitude greater than 0.71) has increased from 81 to 136. This extremely reduces the burden of reading out results.

	\section{Conclusion and Prospect}\label{section7}
	In this paper, we propose a trainable quantum feature mapping strategy that owns the power to deal with the non-linear problems through data re-uploading. A well-trained TQFM can be directly used for classification or as a subroutine in machine learning algorithms \cite{TQFMout,TQFMout1}. We have conducted in-depth research on quantum classifiers equipped with TQFM and proposed the support vectors based variational quantum support vector machine algorithm that shows a better distinguishability in the classification stage.  We rigorously analyze the errors and advantages of SV-QSVM. As a promotion,  we demonstrate how to use a quantum search algorithm to iteratively read out multi-classification results with high probability.

    We have numerically validated the proposed algorithms in \textit{qiskit} package, all the results are consistent with our analysis. We have trained the IRIS dataset with the simplest TQFM layout, and the resulting kernel matrix exhibits considerable clustering performance. We verify the preferable distinguishability by introducing LS-QSVM as a comparison. The experimental results not only demonstrate the better distinguishability of SV-QSVM, but also surpass LS-QSVM in terms of classification success rate. In respect of multi-classification, through iteration, the proportion of stored classification results with amplitude greater than 0.71 is greater than 90\%. This is enough to ensure that one can accurately read out the results with a lower number of measurements.

     It is necessary to mention the barren plateau phenomenon during the optimization process, as it is an open question that variational quantum algorithms need to face. As the number of qubits increases, the gradient becomes concentrated, causing the training to suffer from the curse of the barren plateau  \cite{barren}. If the loss functions can be reduced to local loss functions, then their gradients vanish  polynomially rather than exponentially in the number of qubits \cite{barren1}. The recent advancements on the issue of barren plateaus have deeply explored this type of problem \cite{barren2}.

     The optimization stage of SV-QSVM may be undesirable for nowadays NISQ devices. However, quantum annealers manufactured by D-Wave Systems are available with about 2000 qubits \cite{dwave,dwave1}. They automatically produce a variety of close-to-optimal solutions to an optimization problem.  That is, the optimization  process of SVM can be accomplished on d-wave annealers unhurriedly \cite{dwave2}. This has broadened the path for current quantum machine learning.

	\section*{Acknowledgments}
	We thank Jin-min Liang for the helpful discussions and suggestions. This work is supported by the Shandong Provincial Natural Science Foundation for Quantum Science No. ZR2021LLZ002 and the Fundamental Research Funds for the Central Universities No. 22CX03005A.
	
	\begin{appendix}
		\section{Support vector machine\label{apa}}
		Here, we briefly review the classical support vector machine for binary classification. The mission of an SVM is assigning a label to test datum $x\in S\in\mathbb{R}^n$ depending on the decision function stems from training samples $\{x_i,y_i\}_{i=1,2,...,M}\in T\times\{+1,-1\}\in\mathbb{R}^n\times{R}$.  An optimal decision function can be generated by maximizing the distance $\frac{2}{\|w\|}$ between hyperplanes with opposite labels, equal to
		\begin{align}
			&\min \frac{1}{2}\|w\|^2 \label{8}\\
			\notag {\rm s.t.}&\  y(w^Tx_i+b)\geqslant1,\\
			\notag &\  i=1,2,...,M.
		\end{align}
		Returns $w$ for classifying through $y=\sign{w^Tx+b}$. The above assumption is under ``ideal conditions", i.e. $T$ is linearly separable. However, the most practical dataset is non-linear separable, thus Eq.(\ref{8}) will get stuck with ``bad conditions". By introducing slack variables, the above predicament can be overcome
		\begin{align}
			\min &\frac{1}{2}\|w\|^2+\frac{\gamma}{2}\sum_{i=1}^M\xi_i^2 \label{9}\\
			\notag {\rm s.t.}\  &y(w^Tx_i+b)\geqslant1-\xi_i,\\
			\notag &\  i=1,2,...,M.
		\end{align}
		Although the above optimization can be directly solved, there is a more efficient choice by introducing the Lagrange multiplier, then deriving Lagrange dual form
		\begin{align}
			\max_{\alpha} &\sum_{i=1}^M\alpha_i-\frac{1}{2}\sum_{i,j=1}^M\alpha_i\alpha_jy_iy_jK_0(x_i,x_j)-\frac{1}{2\gamma}\sum_{i=1}^M\alpha_i^2\notag\\
			&{\rm s.t.}\ \alpha_i\geqslant0,\ \sum_{i=1}^M\alpha_iy_i=0,\ i=1,2,...,M.\notag
		\end{align}
		Gives the optimization result
		\begin{align*}
			y = \sign{\sum_{i=1}^M\alpha_iy_iK_0(x,x_i)+b}.
		\end{align*}
		By adding $l_2$-regularization term of bias $\frac{\lambda}{2}b^2$ to Eq.(\ref{9}) can acquire a relaxed decision function
		\begin{align}
			y=\sign{\sum_{i=1}^M\alpha_iy_i[K_0(x,x_i)+\frac{1}{\lambda}]}\label{9.1}.
		\end{align}
		with $b=\frac{\sum_{i=1}^M\alpha_iy_i}{\lambda}$.
		
		The primal Lagrangian is given by
		\begin{align*}
			L_P=\frac{1}{2}\|w\|^2+\frac{\gamma}{2}\sum_{i}\xi_i^2+\frac{\lambda}{2}b^2-\sum_i\alpha_i(yw^Tx+yb-1-\xi_i),
		\end{align*}
		the dual optimal solutions Eq.(\ref{9.1}) can be connected via the Karush-Kuhn-Tucker (KKT) conditions
		\begin{align*}
			w&=\sum_i\alpha_ix_iy_i,\\
			0&=\sum_i\alpha_iy_i,\\
			\xi_i&=\frac{\alpha_i}{\gamma},\\
			b&=\sum_i\frac{\alpha_iy_i}{\lambda}.
		\end{align*}
      \section{Random sampling}\label{apb}
The workflow of random sampling begins with refining the $M\times{M}$ kernel matrix $K$ in terms of non-zero elements in the form
	\begin{align*}
		K &= \sum_{(i)}K_{x^{(i)},y^{(i)}}|x^{(i)}\rangle\langle y^{(i)}|\\
		&= \sum_{(i)}K_{x^{(i)},y^{(i)}}|x^{(i)}_1\rangle\langle y^{(i)}_1|\otimes|x^{(i)}_2\rangle\langle y^{(i)}_2|\otimes...\otimes|x^{(i)}_M\rangle\langle y^{(i)}_M|
	\end{align*}
	where $(i)$ represents the index of entries and $K_{x^{(i)},y^{(i)}}$ is the matrix entry at $(x^{(i)},y^{(i)})$. In addition, we have
	\begin{align*}
		|0\rangle\langle0|&=(I+Z)/2,\\
		|0\rangle\langle1|&=(X+iY)/2,\\
		|1\rangle\langle0|&=(X-iY)/2,\\
		|1\rangle\langle1|&=(I-Z)/2,
	\end{align*}
	and $K$ could be expanded as Pauli operators
	\begin{align*}
		K = \sum_{(i)}K_{x^{(i)},y^{(i)}}\sum_{j^{(i)}}\frac{1}{2^M}\sigma_{j^{(i)}},
	\end{align*}
	with $\sigma_{j^{(i)}}$ being one of $2^M$ Pauli strings. At the end, to randomly measure $K$, we first sample $(i)$ from $|K_{x^{(i)},y^{(i)}}|/q(i)=\sum_{(i)}|K_{x^{(i)},y^{(i)}}|$ and then toss $m$ unbiased coins to decide $j^{(i)}$ for each $\sigma_{j^{(i)}}$. In particular, according to Hoeffding's inequality, we need $O(\log(\delta^{-1})\Lambda^2/\varepsilon^{2})$ samples to an error $\epsilon$ with failure probability $\delta$, where $\Lambda=\sum_{j}\lambda_j$.

		\section{Quantum least squares support vector machine\label{apc}}
		The core idea is the least square method, which transforms the quadratic process into solving linear equations. The introduction of slack variables replaces the inequality constraint of Eq.(\ref{9}) with the equality constraint, the restriction of $\xi_i\geqslant0$ has been removed, this time $\xi_i$ represents the degree to which the sample does not satisfy the constraint. After a series of algebraic operations gives
		\begin{gather}
			F\begin{pmatrix}
				b\\\vec\alpha
			\end{pmatrix}
			=\begin{pmatrix}
				0&\vec1^{\rm{T}}\\\vec1&K_0+\gamma^{-1}I
			\end{pmatrix}
			\begin{pmatrix}
				b\\\vec\alpha
			\end{pmatrix}
			=\begin{pmatrix}
				0\\\vec{y}
			\end{pmatrix}. \label{10}
		\end{gather}
		Here, the matrix $F$ is $(M+1)\times(M+1)$ dimensional, $K_0(x_i,x_j)=\langle\psi(x_i)|\psi(x_j)\rangle$, $\vec{y}=(y_1;y_2;...;y_M)$, and $\vec1=(1;1;...;1)$. The hyperplane parameters can be obtained by solving Eq.(\ref{10}) in a quantum manner such as the HHL algorithm, variational quantum-classical algorithm et.
		
		The label for datum $x$ will be revealed by
		\begin{align*}
			y&=\sign{\langle{v}|u\rangle},\\
			|u\rangle&=\frac{1}{\sqrt{N_u}}(b|0\rangle|0\rangle+\sum_{i=1}^M\alpha_i|i\rangle|\psi(x_i)\rangle),\\
			|v\rangle&=\frac{1}{\sqrt{N_{x}}}(|0\rangle|0\rangle+\sum_{i=1}^M|i\rangle|\psi(x)\rangle),
		\end{align*}
		with $N_u$ and $N_{x}$ normalization factor.
\section{Estimation of iteration times} \label{apd}
    In the estimation of iteration times, we proceed under the assumption of \textbf{Theorem} \ref{th1}, and set each $M_j$ which is the number of category $j$ is equal.  The final state before applying amplitude amplification can be refined as
    \begin{align*}
    |\phi_0\rangle=a|a\rangle+\sqrt{1-a^2}|b\rangle,
    \end{align*}
where $|a\rangle$ is the target phase, that is, the amplitude of these phases needs to be amplified. Totaling $L$ phases, and each one stores the result. The sum of $L$ amplitudes, i.e. the value of $a$, determines the number of iterations, is roughly $g(x)=\sum_{i=1}^{M_j}\frac{\alpha_ik(x_i,x)}{\sqrt M_j}$. And $|b\rangle$ is a summary of the phase of storing junk information. The upper limit of $a$  is easily determined by
    \begin{align*}
    a&=\sum_{i=1}^{M_j}\frac{\alpha_ik(x_i,x)}{\sqrt M_j}\\
     &\leqslant\|\alpha_i\|\|k(x_i,x)\|\\
            &=\sqrt{\frac{k^2(x_1,x)+k^2(x_2,x)+...+k^2(x_{M_j},x)}{M_j}}\\
            &=\frac{1}{\sqrt 3},
    \end{align*}
where the second line uses Cauchy-Schwarz inequality and the third line uses the \textit{Central Limit Theorem}. Assuming $a$ is a positive number, and the sign of $a$ does not affect the number of iterations.

To estimate the lower bound of $a$, we need to use $\alpha_i\propto\frac{1}{\sqrt{M_j}}$, then $g(x)$ flows a normal distribution $N(0,\frac{1}{3M_j})$. So we have
    \begin{align*}
    &\int_0^ag(x)dx=\frac{1}{4}\\
    &\int_0^a\frac{1}{\sqrt{2\pi}\sigma}e^{-\frac{x^2}{2\sigma^2}}dx=\frac{1}{4}\\
    &a=\sqrt{\frac{2\ln{\frac{4}{3}}}{3M_j}}\approx \frac{1}{\sqrt{5M_j}}.
    \end{align*}
The reason we chose the interval between the median and maximum values is that the function is highly dense near the median point, and a well-trained quantum feature mapping has good distinguishability.

Then the iteration equation
    \begin{align*}
    \frac{a}{2}+Ra=\frac{\pi}{2}
    \end{align*}
 gives $R\in[1,\frac{\sqrt{5M_j}\pi-1}{2}]$.\\

	\end{appendix}

\end{document}